\definecolor{kugray5}{RGB}{224,224,224}
\newtheorem{theorem}{Theorem}
\newtheorem{subsec:coding}{subsec:coding}
\newtheorem{lemma}{Lemma}
\newcommand{\ls}[1]  
   {\dimen0=\fontdimen6\the=#1\dimen0
    \advance\lineskip.5\fontdimen5\the\lineskip-\dimen0
    \lineskiplimit=.9\lineskip
    \baselineskip=\lineskip
    \advance\baselineskip\dimen0
    \normallineskip\lineskip
    \normallineskiplimit\lineskiplimit
    \normalbaselineskip\baselineskip
    \ignorespaces
   }
\begin{document}

\title{Adaptive Electricity Scheduling in Microgrids}


\author{Yingsong~Huang~\IEEEmembership{Student~Member,~IEEE},~Shiwen~Mao,~\IEEEmembership{Senior~Member,~IEEE},~and~R.~M.~Nelms~\IEEEmembership{Fellow,~IEEE}%
\thanks{
This work was presented in part at IEEE INFOCOM 2013, Turin, Italy, Apr. 2013~\cite{Huang13}.}
\thanks{Y. Huang, S. Mao, and R.M. Nelms are with the Department of Electrical and Computer Engineering, Auburn University, Auburn, AL 36849-5201. Email: yzh0002@tigermail.auburn.edu, smao@ieee.org, nelmsrm@auburn.edu.}
\thanks{Shiwen Mao is the corresponding author: smao@ieee.org, Tel: (334)844-1845, Fax: (334)844-1809. }
\thanks{Copyright \copyright 2013 by Yingsong Huang, Shiwen Mao, and R. M. Nelms. }
}

\maketitle


\begin{abstract}
Microgrid (MG) is a promising component for future smart grid (SG) deployment. The balance of supply and demand of electric energy is one of the most important requirements of MG management. In this paper, we present a novel framework for smart energy management based on the concept of quality-of-service in electricity (QoSE). Specifically, the resident electricity demand is classified into {\em basic usage} and {\em quality usage}. The basic usage is always guaranteed by the MG, while the quality usage is controlled based on the MG state. The microgrid control center (MGCC) aims to minimize the MG operation cost and maintain the outage probability of quality usage, i.e., QoSE, below a target value, by scheduling electricity among renewable energy resources, energy storage systems, and macrogrid. The problem is formulated as a constrained stochastic programming problem. The Lyapunov optimization technique is then applied to derive an adaptive electricity scheduling algorithm by introducing the QoSE virtual queues and energy storage virtual queues. The proposed algorithm is an online algorithm since it does not require any statistics and future knowledge of the electricity supply, demand and price processes. We derive several ``hard'' performance bounds for the proposed algorithm, and evaluate its performance with trace-driven simulations. The simulation results demonstrate the efficacy of the proposed electricity scheduling algorithm.  
\end{abstract}


\begin{keywords} 
Smart grid, Microgrids, distributed renewable energy resource, 
Lyapunov optimization, stability.
\end{keywords}

\pagestyle{plain}\thispagestyle{plain}

\section{Introduction}

Smart grid (SG) is a modern 
evolution of the utility electricity delivery system. 
SG enhances the traditional power grid through computing, communications, networking, and control technologies throughout the processes of electricity generation, transmission, distribution and consumption. 
The two-way flow of electricity and real-time information is 
a characteristic feature of SG, which offers many technical benefits and flexibilities to both utility providers and consumers, for balancing supply and demand in a timely fashion and improving energy efficiency and 
grid stability.
According to the US 2009 Recovery Act~\cite{RRA09}, 
an SG will replace the traditional system and is expected to save consumer cost and 
reduce America's dependence on foreign oil. 
These goals are to be achieved by improving efficiency and spurring the use of renewable energy resources.

Microgrid (MG) is a promising component for future SG deployment. Due to the increasing deployment of distributed renewable energy resources (DRERs), MG provides a localized cluster of renewable energy generation, storage, distribution and local demand, to achieve reliable and effective energy supply with simplified implementation of SG functionalities~\cite{Fang11,Huang08}. A typical MG architecture is illustrated in Fig.~\ref{fig:microgrid}, consisting of \
DRERs (such as wind turbines and solar photovoltaic cells), energy storage systems (ESS), a communication network (e.g., wireless or powerline communications) for information delivery, an MG central controller (MGCC), and local residents. The MG has centralized control 
with the MGCC~\cite{Huang08}, which exchanges information with local residents, ESS's, and DRERs via the 
information network.  
There is a single common coupling point with the macrogrid. 
When disconnected, the MG works in the {\em islanded mode} and DRERs and ESS's provide electricity to local residents. 
When connected, 
the MG may purchase extra electricity from the macrogrid or sell excess energy back to the market~\cite{Farhangi10}. 

\begin{figure} [!t]
\centering
\includegraphics[width=3.1in]{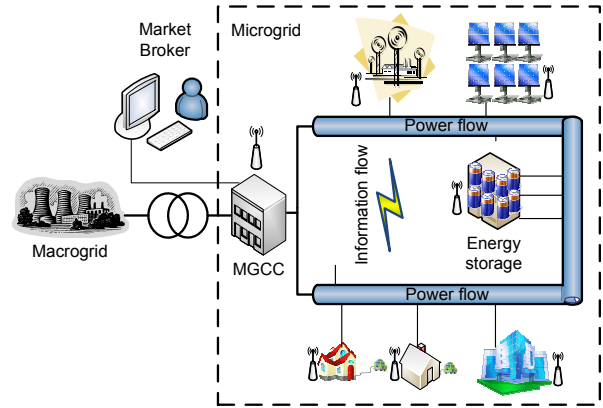}
\caption{Illustrate the microgrid architecture.} 
\label{fig:microgrid}
\end{figure}

The balance of electricity demand and supply is one of the most important requirements in MG management. Instead of matching  supply to demand, smart energy management matches the demand to the available supply using direct load control or off-peak pricing to achieve more efficient capacity utilization~\cite{Fang11}. 
In this paper, we develop a novel control framework for MG energy management, exploiting the two-way flows of electricity and information. 
In particular, we consider two types of electricity usage: 
(i) a pre-agreed {\em basic usage} that is ``hard''-guaranteed, such as basic living usage, and (ii) extra elastic {\em quality usage}
exceeding the pre-agreed level for more comfortable life, such as excessive use of air conditioners or entertainment devices. 
In practice, residents may set their load priority and preference 
to obtain the two types of usage~\cite{Shao11}. The basic usage should be always satisfied, 
while 
the quality usage is controlled by the MGCC according to the grid status, such as DRER generation, ESS storage levels and utility prices. 
The MGCC may \textit{block} some quality usage demand if necessary.  
This can be implemented by incorporating smart meters, smart loads and appliances that can adjust and control their service level through communication flows~\cite{Farhangi10}. 
To quantify residents' satisfaction level,
we define the outage percentage of the quality usage as \textit{Quality of Service in Electricity} (QoSE), which is specified in the service contracts~\cite{Fumagalli04}.  
The MGCC adaptively schedules electricity to keep the QoSE below a target level, and accordingly dynamically balance the load demand to match the available supply. 

In this paper, we investigate the problem of smart energy scheduling by jointly considering renewable energy distribution, ESS management, residential demand management, and utility market participation, aiming to minimize the MG operation cost and guarantee the residents' QoSE. 
The MGCC may serve some quality usage with supplies from the DRERs, ESS's and macrogrid. On the other hand, the MG can also sell excessive electricity back to the macrogrid to compensate for the energy generation cost. The electricity generated from renewable sources is generally random, due to complex weather conditions, while the electricity demand is also random due to the random consumer behavior, and so do the purchasing and selling prices on the utility market. It is challenging to model the random supply, demand, and price processes for MG management, and it may also be costly to have precise, real-time monitoring of the random processes. 
Therefore, a simple, low cost, and optimal electricity scheduling scheme that does not rely on any statistical information of the supply, demand, and price processes would be highly desirable. 

We tackle the MG electricity scheduling problem with a {\em Lyapunov optimization} approach, which is a useful technique to solve stochastic optimization and stability problems~\cite{Tassiulas92}.
We first introduce two virtual queues: QoSE virtual queues and battery virtual queues to transform the QoSE control problem and battery management problem to queue stability problems. Second, we design an adaptive MG electricity scheduling policy based on the Lyapunov optimization method and prove several deterministic (or, ``hard'') performance bounds for the proposed algorithm. The algorithm can be implemented {\em online} because it only relies on the current system status, without needing any future knowledge of the energy demand, supply and price processes. 
The proposed algorithm also converges exponentially due to the nice property of Lyapunov stability design~\cite{Slotine91}. The algorithm is evaluated with trace-driven simulations and is shown to achieve significant efficiency on MG operation cost while guaranteeing the residents' QoSE.

The remainder of this paper is organized as follows. We present the system model and problem formulation in Section~\ref{sec:sys}. 
An adaptive MG electricity scheduling algorithm is designed and analyzed 
in Section~\ref{sec:policy}. Simulation results are presented and discussed in Section~\ref{sec:sim}. We discuss related work in Section~\ref{sec:related}. Section~\ref{sec:con} concludes the paper.


\begin{table} [!t] 
\begin{center}
\caption{Notation}
\label{tab:notation}
\setlength{\extrarowheight}{-3pt}
\begin{tabular}{ll}
       \hline 
       Symbol             & Description \\ 
       \hline
       $N$                & total number of residents \\
       $K$                & total number of batteries \\
       $T$                & total number of slots\\
       $E_k(t)$           & energy level for battery $k$ at time slot $t$ \\
       $R_k(t)$           & recharging energy for battery $k$ at time slot $t$\\
       $D_k(t)$           & discharging energy for battery $k$ at time slot $t$\\
       $E_k^{max}$        & maximum battery energy level for battery $k$ \\
       $E_k^{min}$        & minimum battery energy level for battery $k$\\
       $R_k^{max}$        & maximum supported recharging energy for batter $k$ in a slot\\
       $D_k^{max}$        & maximum supported discharging energy for battery $k$ in a slot\\
       $\lambda_n$        & average quality usage arrival rate for resident $n$ \\
       $\rho_n$           & average outage rate of quality usage for resident $n$ in MG \\
       $\delta_n$           & target QoSE for resident $n$ in MG \\
       $\alpha_n(t)$      & quality usage of residents $n$ in time slot $t$ \\
       $\alpha_n^{max}$     & maximum quality usage of resident $n$ in a single slot \\
       $\alpha_n^b(t)$    & basic electricity usage of resident $n$ in time slot $t$ \\
       $P(t)$             & available electricity from DRERs to supply quality usage in \\
                          &time slot $t$\\
       $U(t)$             & electricity generated from DRERs in time slot $t$ \\
       $Q(t)$            & electricity purchased from macrogrid in time slot $t$ \\
       $S(t)$            & electricity sold on the market in time slot $t$ \\
       $p_n(t)$          & electricity to the resident $n$ \\
       $C(t)$            & purchasing price on the utility market in time slot $t$ \\
       $W(t)$            & selling price ob the utility market in time slot $t$ \\
       $I_n(t)$          & indicator function for outage events of quality usage of\\
                         & resident $n$ in time slot $t$ \\
       $C_{min}$         & minimum purchasing price of utility from macrogrid\\
       $C_{max}$         & maximum purchasing price of utility from macrogrid\\
       $W_{min}$         & minimum selling price of utility to macrogrid\\
       $W_{max}$         & maximum selling price of utility to macrogrid\\
       $X_k(t)$          & battery virtual queue for the battery $k$\\
       $Z_n(t)$          & QoSE virtual queue for the resident $n$\\
       $\vec{\Theta}(t)$ & states of the virtual queues $X_k(t)$ and $Z_n(t)$ \\
       $L(\cdot)$        & Lyapunov function \\
       $\Delta(t)$       & Lyapunov one step drift \\
       $\mathbb{A}(t)$   & proposed scheduling policy including $Q(t), S(t), R_k(t)$,  \\
                         & $D_k(t)$ and $p_n(t)$\\
       $y^*$               & optimal objective value of problem~(\ref{eq:obj})\\
       $\hat{\mathbb{A}}(t)$ & relaxed scheduling policy for problem~\ref{eq:relaxobj}\\
       $\hat{y}$         & optimal objective value of problem~(\ref{eq:relaxobj})\\
       \hline 
\end{tabular}
\end{center}
\end{table}


\section{System Model and Problem Formulation} \label{sec:sys}

\subsection{System Model}

\subsubsection{Overview}

We consider the electricity supply and consumption in an MG as shown in Fig.~\ref{fig:microgrid}. 
We assume that the MG is properly designed 
such that a portion of the electricity demand related to basic living usage (e.g., lighting) 
from the residents, termed {\em basic usage}, can be guaranteed by the minimum capacity of the MG. 
There are randomness in both electricity supply (e.g., weather 
change) and demand (e.g., entertainment 
usage in weekends). To cope with the randomness, 
the MG works in 
the {\em grid-connected} mode and is equipped with ESS's, such as electrochemical battery, 
superconducting magnetic energy storage, flywheel energy storage, etc. The ESS's store excess electricity for future use. 

The MGCC collects information about the resident demands, DRER supplies, and ESS levels through the information network. 
When a resident demand exceeds the pre-agreed level, 
a {\em quality usage} request will be triggered and transmitted to the MGCC. 
The MGCC will then decide the amount of quality usage to be satisfied with energy from the DRERS, the ESS's, or by purchasing electricity from the macrogrid. The MGCC may also decline some quality usage requests. The excess energy can be stored at the ESS's or sold back to the macrogrid for compensating the cost of MG operation.

Without loss of generality, we consider a time-slotted system. 
The time slot duration is determined by the timescale of the demand and supply processes. 

\subsubsection{Energy Storage System Model}

The system model is shown in Fig.~\ref{fig:system}. Consider a battery farm with $K$ independent battery cells, 
which can be recharged and discharged. 
We assume that the batteries are not leaky and do not consider the power loss in recharging and discharging, since the amount is usually small. 
It is easy to relax this assumption by applying a constant percentage on the recharging and discharging processes. For brevity, we also ignore the aging effect of the battery and the maintenance cost, since the cost on the utility market dominates the operation cost of MGs.

\begin{figure} [!t]
\centering
\includegraphics[width=3.3in]{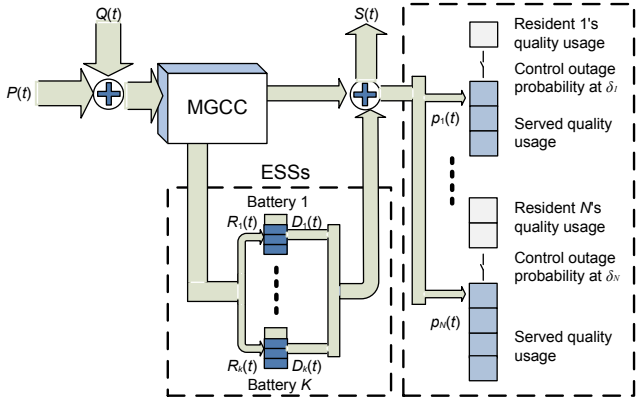}
\caption{The system model considered in this paper.} 
\label{fig:system}
\end{figure}

Let $E_k(t)$ denote the energy level of the the $k$th battery in time slot $t$. 
The capacity of the battery is bounded as 
\begin{equation} \label{eq:batlimit}
E_k^{min} \leq E_k(t) \leq E_k^{max}, \forall \; k, t,
\end{equation}
where $E_k^{max} \geq 0$ is the maximum capacity, and $E_k^{min} \geq 0$ is the minimum energy level required for battery $k$, which may be set by the battery deep discharge protection settings.
The dynamics over time of $E_k(t)$ can be described as
\begin{equation} \label{eq:batdyn}
E_k(t+1) = E_k(t) - D_k(t) + R_k(t), \forall \; k, t,
\end{equation}
where $R_k(t)$ and $D_k(t)$ are the recharging and discharging energy for battery $k$ in time slot $t$, respectively. 
The charging and discharging energy in each time slot are bounded as 
\begin{eqnarray} \label{eq:chadischa}
\left\{ \begin{array}{ll}
   0 \leq R_k(t) \leq R_k^{max}, & \forall \; k, t \\
   0 \leq D_k(t) \leq D_k^{max}, & \forall \; k, t.
   	    \end{array} \right. 
\end{eqnarray}
In each time slot $t$, $R_k(t)$ and $D_k(t)$ are determined such that (\ref{eq:batlimit}) is satisfied in the next time slot. 


Usually the recharging and discharging operations cannot be performed simultaneously, which leads to
\begin{eqnarray} \label{eq:batredecontr}
\left\{ \begin{array}{ll}
  R_k(t) > 0 \Rightarrow D_k(t) = 0, & \forall \; k,t \\
  D_k(t) > 0 \Rightarrow R_k(t) = 0, & \forall \; k,t. 
   	    \end{array} \right. 
\end{eqnarray}



\subsubsection{Energy Supply and Demand Model}

Consider $N$ residents in the MG; each generates basic and quality electricity usage requests, and each can tolerate a prescribed {\em outage probability} $\delta_n$ for the requested quality usage part. 
The MGCC adaptively serves quality usage requests at different levels to maintain the QoSE as well as the stability of the grid. 
The service of quality usage can be different for different residents, depending on individual service agreements. 

Let $\lambda_n$ be the {\em average quality usage arrival rate}, and $\delta_n$ a prescribed outage tolerance (i.e., a percentage) for user $n$. The average {\em outage rate} for the quality usage, $\rho_n$, should satisfy
\begin{equation} \label{eq:qouconst}
  \rho_n \leq \delta_n \cdot \lambda_n. 
\end{equation}
At each time $t$, the quality usage request from resident $n$ is $\alpha_n(t) \in [0, \alpha_n^{max}]$ 
units, which is 
an i.i.d random variable with a general distribution and mean $\lambda_n$. The average rate is 
$\lambda_n = \lim_{t \rightarrow \infty} (1/t) \sum_{\tau=0}^{t-1}{\alpha_n(\tau)}$ according to the Law of Large Numbers.

The DRERs in the MG generate $U(t)$ units of electricity in time slot $t$. $U(t)$ can offer enough capacity to support the pre-agreed {\em basic usage} in the MG, which is guaranteed by islanded mode MG planning.
The electricity is transmitted over power transmission lines. Without loss of generality, we assume the power transmission lines are not subject to outages and the transmission loss is negligible. Let $\alpha_n^b(t)$ be the pre-agreed {\em basic usage} for resident $n$ in time slot $t$, which can be fully satisfied by $U(t)$, i.e.,  $\sum_{n=1}^N\alpha_n^b(t) \leq U(t)$, for all $t$. 
In addition, some {\em quality usage} request $\alpha_n(t)$ may be satisfied if $P(t) = U(t) - \sum_{n=1}^N \alpha_n^b(t) \geq 0$. 
Let $p_n(t)$ be the energy allocated for the quality usage of resident $n$. We have 
\begin{equation} \label{eq:resbal}
  0 \leq p_n(t) \leq \alpha_n(t).
\end{equation}

We define a function $I_n(t) \geq 0$ to indicate the amount of quality usage outage for resident $n$, as 
$I_n(t) =  \alpha_n(t) - p_n(t)$. 
Then the average outage rate can be evaluated as
$\rho_n = \lim_{t \rightarrow \infty} (1/t) \sum_{\tau=0}^{t-1}{I_n(\tau)}$.

The MGCC may purchase additional energy from the macrogrid or sell some excess energy 
back to the macrogrid. Let $Q(t) \in [0, Q_{max}]$
denote the energy purchased from the macrogrid and $S(t) \in [0, S_{max}]$
the energy sold on the market in time slot $t$, 
where $Q_{max}$ and $S_{max}$ are determined by the capacity of the transformers and power transmission lines. Since it is not reasonable to purchase and sell energy on the market at the same time, we have the following constraints
\begin{eqnarray} \label{eq:sellconst}
\left\{ \begin{array}{ll}
	Q(t) > 0 \Rightarrow S(t) = 0, & \forall \; t \\
	S(t) > 0 \Rightarrow Q(t) = 0, & \forall \; t.
	      \end{array} \right. 
\end{eqnarray}
To balance the supply and demand in the MG, we have
\begin{equation} \label{eq:mgbal}
P(t) \hspace{-0.0125in} + \hspace{-0.0125in} Q(t) \hspace{-0.0125in} + \hspace{-0.0125in} \sum_{k=1}^K{D_k(t)} \hspace{-0.0125in} - \hspace{-0.0125in}  S(t) \hspace{-0.0125in} - \hspace{-0.0125in} \sum_{k=1}^K{R_k(t)} \hspace{-0.0125in} = \hspace{-0.0125in} \sum_{n=1}^Np_n(t), \forall \; t.
\end{equation}
\subsubsection{Utility Market Pricing Model}

The price for purchasing electricity from the macrogrid in time slot $t$ is $C(t)$ per unit. The purchasing price depends on the utility market state, such as peak/off time of the day. We assume finite $C(t) \in [C_{min}, C_{max}]$, which is announced by the utility market at the beginning of each time slot and remains constant during the slot period~\cite{Kim11}. Unlike prior work~\cite{Kim11}, we do not require any statistic information of the $C(t)$ process, except that it is independent to the amount of energy to be purchased in that time slot. 

If the MGCC determines to sell electric energy on the utility market, the selling price from the market broker is denoted by $W(t) \in [W_{min}, W_{max}]$ in time slot $t$, which is also a stochastic process with a general distribution. 
We also assume $W(t)$ is known at the beginning of each time slot and independent to the amount of energy to be sold on the market. We assume $C_{max} \geq W_{max}$, $C_{min} \geq W_{min}$ and  $C(t) > S(t)$ for all $t$. That is, the MG cannot make profit by greedily purchasing energy from the market and then sell it back to the market at a higher price simultaneously.

\subsection{Problem Formulation}

Given the above models, a control policy $\mathbb{A}(t) = \{Q(t), S(t), R_k(t), D_k(t), p_n(t)\}$ is designed to minimize the operation cost of the MG and guarantee the QoSE of the residents. We formulate the electricity scheduling problem as
\begin{eqnarray} \label{eq:obj}
\mbox{minimize:} && \hspace{-0.1in} \lim_{t \rightarrow \infty}\frac{1}{t}\sum_{\tau=0}^{t-1}\mathbb{E}\{Q(\tau)C(\tau) - S(\tau)W(\tau)\} \\
\mbox{s.t.} && \hspace{-0.1in} \mbox{(\ref{eq:batlimit}), (\ref{eq:chadischa}), (\ref{eq:batredecontr}), (\ref{eq:qouconst}), (\ref{eq:resbal}), (\ref{eq:sellconst}), (\ref{eq:mgbal}) } \nonumber \\
            && \hspace{-0.1in} \mbox{battery queue stability constraints.} \nonumber
\end{eqnarray}
Problem~(\ref{eq:obj}) is a stochastic programming problem, where the utility prices, generation of DRERs, and consumption of residents are all random. The solution also depends on the evolution of battery states. It is challenging since the supply, demand, and price are all general processes. 


\subsubsection{Virtual Queues} \label{subsubsec:vqs}

We first adopt a {\em battery virtual queue} $X_k(t)$ that tracks the charge level of each battery $k$:
\begin{equation} \label{eq:batteryque}
  X_k(t) = E_k(t) - D_k^{max} - E_k^{min} - VC_{max}, \;\; \forall \; k, t, 
\end{equation}
where $0 < V \leq V_{max} = \min_{k} \left\{\frac{E_k^{max} - E_k^{min}-R_k^{max}-D_k^{max}}{C_{max} - W_{min}} \right\}$ is a constant for the trade-off between system performance and ensuring the battery constraints. 
This constant $V_{max}$ is carefully selected to ensure the evolution of the battery levels always satisfy the battery constraints~(\ref{eq:batlimit}), which will be examined 
in Section~\ref{subsec:perf}.
The virtual queue can be deemed as a shifted version of the battery dynamics in (\ref{eq:batdyn}) as
\begin{equation} \label{eq:virtualbat}
  X_k(t+1) = X_k(t) - D_k(t) + R_k(t), \;\; \forall \; k, t.
\end{equation}
These queues are ``virtual'' because they are maintained by the MGCC control algorithm. Unlike an actual queue, the virtual queue backlog $X_k(t)$ may take negative values. 

We next 
introduce a conceptual {\em QoSE virtual queue} $Z_n(t)$, 
whose dynamics are governed by the system equation as
\begin{equation} \label{eq:vqueue}
  Z_n(t+1) = [Z_n(t) - \delta_n \cdot \alpha_n(t)]^+ + I_n(t), \;\; \forall \; n, t.
\end{equation}
where $[x]^+ = \max\{0, x\}$.


\begin{theorem} \label{propo:vqueue}
If an MGCC control policy stabilizes the QoSE virtual queue $Z_n(t)$, the outage quality usage of resident $n$ will be stabilized at the average QoSE rate $\rho_n \leq {\delta}_n \cdot \lambda_n$.
\end{theorem}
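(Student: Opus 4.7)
The plan is to exploit the standard Lyapunov telescoping trick on the virtual queue recursion in~(\ref{eq:vqueue}) and then invoke the Law of Large Numbers together with the stability hypothesis. The whole argument hinges on the simple observation that dropping the $[\cdot]^+$ operator gives a clean linear lower bound on the one-step dynamics.

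First, I would fix a resident $n$ and a horizon $t$, and use $[x]^+ \geq x$ to write
\begin{equation*}
Z_n(\tau+1) \;\geq\; Z_n(\tau) - \delta_n \alpha_n(\tau) + I_n(\tau), \quad \forall \tau.
\end{equation*}
Rearranging yields $Z_n(\tau+1) - Z_n(\tau) \geq I_n(\tau) - \delta_n \alpha_n(\tau)$, which is the key one-slot inequality.

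Next, I would sum this inequality telescopically over $\tau = 0, 1, \ldots, t-1$, obtaining
\begin{equation*}
Z_n(t) - Z_n(0) \;\geq\; \sum_{\tau=0}^{t-1} I_n(\tau) \;-\; \delta_n \sum_{\tau=0}^{t-1} \alpha_n(\tau).
\end{equation*}
Dividing by $t$ and rearranging gives
\begin{equation*}
\frac{1}{t}\sum_{\tau=0}^{t-1} I_n(\tau) \;\leq\; \delta_n \cdot \frac{1}{t}\sum_{\tau=0}^{t-1} \alpha_n(\tau) \;+\; \frac{Z_n(t) - Z_n(0)}{t}.
\end{equation*}
Now I take $t \to \infty$. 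The left-hand side converges to $\rho_n$ by definition. On the right-hand side, the i.i.d.\ hypothesis on $\alpha_n(t)$ together with the Law of Large Numbers gives that the time average converges to $\lambda_n$, and the assumption that the control policy stabilizes $Z_n(t)$ forces $Z_n(t)/t \to 0$ (either pathwise or in expectation, depending on the exact notion of stability being used), while $Z_n(0)/t \to 0$ trivially. Passing to the limit then yields $\rho_n \leq \delta_n \lambda_n$, which is exactly~(\ref{eq:qouconst}).

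The main obstacle is a matter of bookkeeping rather than genuine difficulty: I would have to pin down precisely which notion of ``stabilize'' is meant. In the Lyapunov optimization literature the usual choice is mean-rate stability, $\lim_{t\to\infty}\mathbb{E}[|Z_n(t)|]/t = 0$, in which case the argument above should be carried out inside an expectation and the LLN replaced by the identity $\mathbb{E}[\alpha_n(\tau)] = \lambda_n$. Either way the core step is trivial; what must be stated carefully is that the stability notion we assume matches the mode of convergence in which $\rho_n$ is claimed to be bounded.
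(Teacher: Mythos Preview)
Your proposal is correct and follows essentially the same approach as the paper: both drop the $[\cdot]^+$ to obtain the one-slot inequality $Z_n(\tau+1) \geq Z_n(\tau) - \delta_n\alpha_n(\tau) + I_n(\tau)$, telescope over $\tau=0,\ldots,t-1$, divide by $t$, and let $t\to\infty$ using the stability assumption to kill $(Z_n(t)-Z_n(0))/t$. Your discussion of which stability notion is intended is, if anything, more careful than the paper's own treatment, which simply invokes rate stability to conclude $\lim_{t\to\infty}(Z_n(t)-Z_n(0))/t=0$.
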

\begin{proof}
According to the system equation~(\ref{eq:vqueue}), we have
\begin{eqnarray} \label{eq:list}
\left\{ \begin{array}{l}
   Z_n(1) \geq Z_n(0) - \delta_n \cdot \alpha_n(0) + I_n(0)  \\
          \cdots  \\
   Z_n(t) \geq Z_n(t-1) -\delta_n \cdot \alpha_n(t-1) + I_n(t-1). 
        \end{array} \right. 
\end{eqnarray}
Summing up the inequalities in (\ref{eq:list}),
we have
\begin{eqnarray} 
Z_n(t) \geq Z_n(0) - \delta_n \cdot \sum_{\tau=0}^{t-1}{\alpha_n(\tau)} + \sum_{\tau=0}^{t-1}I_n(\tau).
\end{eqnarray}
Dividing 
both sides by $t$ 
and letting $t$ go to infinity, we have
\begin{eqnarray}
\lim_{t \rightarrow \infty}\frac{Z_n(t) \hspace{-0.025in} - \hspace{-0.025in} Z_n(0)}{t} \geq \lim_{t \rightarrow \infty}\frac{1}{t} \left[ -\delta_n \sum_{\tau=0}^{t-1}{\alpha_n(\tau)} \hspace{-0.025in} + \hspace{-0.025in} \sum_{\tau=0}^{t-1}{I_n(\tau)} \right]. \nonumber
\end{eqnarray}
Note that $Z_n(0)$ is finite. If $Z_n(t)$ is rate stable by a control policy $I_n(t)$, it is finite for all $t$. We have $\lim_{t \rightarrow \infty}\frac{Z_n(t)-Z_n(0)}{t} = 0$, which yields
$\rho_n \leq \delta_n \cdot \lambda_n$ due to the definitions of $\lambda_n$ 
and $I_n(t)$. 
\end{proof}

\subsubsection{Problem Reformulation}

With Theorem~\ref{propo:vqueue}, we can transform the original problem (\ref{eq:obj}) into a queue stability problem with respect to the QoSE virtual queue and the battery virtual queues, which leads to a system stability design from the control theoretic point of view.  
We have a reformulated stochastic programming problem as follows.
\begin{eqnarray} \label{eq:qobj}
\mbox{minimize:} && \hspace{-0.1in} \lim_{t\rightarrow\infty}\frac{1}{t}\sum_{\tau=0}^{t-1}\mathbb{E}\{Q(\tau)C(\tau) - S(\tau)W(\tau)\} \\
\mbox{s.t.} && \hspace{-0.1in} \mbox{(\ref{eq:chadischa}), (\ref{eq:batredecontr}), (\ref{eq:resbal}), (\ref{eq:sellconst}), (\ref{eq:mgbal}) }\nonumber\\
            && \hspace{-0.1in} \mbox{Battery and QoSE virtual queue stability} \nonumber \\
            && \hspace{-0.1in} \mbox{constraints.} \nonumber
\end{eqnarray}
Theorem~\ref{propo:vqueue} indicates that QoSE provisioning is equivalent to stabilizing the QoSE virtual queue $Z_n(t)$,
while stabilizing the virtual queues~(\ref{eq:virtualbat}) ensures that the battery constraints~(\ref{eq:batlimit}) are satisfied.
We then apply
{\em Lyapunov optimization} to develop an adaptive electricity scheduling policy for problem (\ref{eq:qobj}), in which the policy greedily minimize the Lyapunov drift in every slot $t$ to push the system toward stability.

\subsection{Lyapunov Optimization}


We define the {\em Lyapunov function} for system state $\vec{\Theta}(t) = [\vec{X}(t), \vec{Z}(t)]^T$ with dimension $(N+K)\times1$ as follows, in which $\vec{X}(t) = [X_1(t)\cdots X_K(t)]^T$ and $\vec{Z}(t) = [Z_1(t)\cdots Z_N(t)]^T$. 
\begin{equation}
L(\vec{\Theta}(t)) = \frac{1}{2} \sum_{k=1}^K\left[ X_k(t)\right]^2 + \frac{1}{2} \sum_{n=1}^N\left[ Z_n(t)\right]^2, 
\end{equation}
which is positive definite, since $L(\vec{\Theta}(t)) > 0$ when $\vec{\Theta}(t) \neq \vec{\mathbf{0}}$ 
and $L(\vec{\Theta}(t)) = 0 \Leftrightarrow \vec{\Theta}(t) = \vec{\mathbf{0}}$. 
We then define the conditional one slot {\em Lyapunov drift} as
\begin{equation} \label{eq:driftdef}
\Delta(\vec{\Theta}(t)) = \mathbb{E}\{L(\vec{\Theta}(t+1)) - L(\vec{\Theta}(t))|\vec{\Theta}(t)\}. 
\end{equation} 

With the drift defined as in (\ref{eq:driftdef}), it can be shown that 
\begin{eqnarray}
\Delta(\vec{\Theta}(t)) &=& \frac{1}{2} \mathbb{E} \left\{ \sum_{k=1}^K [(X_k(t+1))^2 - (X_k(t))^2 | X_k(t)]+ \right. \nonumber\\
             & & \hspace*{0.35in} \left. \sum_{n=1}^N [(Z_n(t+1))^2 - (Z_n(t))^2 | Z_n(t)] \right\}\nonumber\\
             &\leq&  B+\sum_{n=1}^N\mathbb{E}\{Z_n(t)(1-\delta_n)\alpha_n(t)|Z_n(t)\} + \nonumber\\
             & & \sum_{k=1}^K\mathbb{E}\{X_k(t)(R_k(t) - D_k(t))|X_k(t)\} - \nonumber \\
             & & \sum_{n=1}^N\mathbb{E}\{(Z_n(t) + \alpha_n(t))p_n(t)|Z_n(t)\}, 
             \label{eq:drift}
\end{eqnarray}
where $B = \frac{1}{2} \sum_{k=1}^K (\max\{D_k^{max}, R_k^{max}\})^2 + \frac{1}{2} \sum_{n=1}^N (2+\delta_n^2)(\alpha_n^{max})^2$ is a constant. 
The derivation of (\ref{eq:drift}) is given in Appendix~\ref{app:drift}.


To minimize the operation cost of the MG, we adopt the {\em drift-plus-penalty method}~\cite{Neely05}. Specifically, we select the control policy $\mathbb{A}(t) = \{Q(t), S(t), R_k(t), D_k(t), p_n(t) \}$ to 
minimize the bound on the drift-plus-penalty as:
\begin{eqnarray} \label{eq:driftpen}
&& \Delta(\vec{\Theta}(t)) + V\mathbb{E}\{Q(t)C(t) - S(t)W(t)|\vec{\Theta}(t)\} \nonumber\\
&\leq& \mbox{right-hand-side of (\ref{eq:drift}) } + \nonumber\\
             & & V\mathbb{E}\{Q(t)C(t) - S(t)W(t)|\vec{\Theta}(t)\},
\end{eqnarray}
where $0 < V \leq V_{max}$ is defined in Section~\ref{subsubsec:vqs} for the trade-off between stability performance 
and operation cost minimization. 
Given the 
current virtual queue states $X_k(t)$ and $Z_n(t)$, market prices $S(t)$ and $W(t)$, available DRERs energy $P(t)$, and the resident quality usage request $\alpha_n(t)$, the optimal policy is the solution to the following problem. 
\begin{eqnarray}\label{eq:objdriftfull}
\mbox{minimize:} && B + \sum_{n=1}^N [Z_n(t)(1-\delta_n)\alpha_n(t)] + \nonumber\\
                && V[Q(t)C(t) - S(t)W(t)] + \nonumber\\
                && \sum_{k=1}^K [X_k(t)(R_k(t) - D_k(t))] - \nonumber \\
                && \sum_{n=1}^N [(Z_n(t) + \alpha_n(t))p_n(t)] \\
    \mbox{s.t.} && \mbox{~(\ref{eq:chadischa}), (\ref{eq:batredecontr}), (\ref{eq:resbal}), (\ref{eq:sellconst}),
                          (\ref{eq:mgbal}) }. \nonumber 
\end{eqnarray}

Since the control policy $\mathbb{A}(t)$ is only applied to the last three terms of~(\ref{eq:objdriftfull}), we can further simplify problem~(\ref{eq:objdriftfull}) as
\begin{eqnarray}\label{eq:objdrift}
\mbox{minimize:} && \hspace{-0.2in} V [Q(t)C(t) - S(t)W(t)] + \sum_{k=1}^K \left[ X_k(t)(R_k(t) - \right. \nonumber\\
                 && \hspace{-0.2in} \left. D_k(t)) \right] - \sum_{n=1}^N [(Z_n(t) + \alpha_n(t))p_n(t)] \\
     \mbox{s.t.} && \hspace{-0.2in} \mbox{~(\ref{eq:chadischa}), (\ref{eq:batredecontr}), (\ref{eq:resbal}), (\ref{eq:sellconst}),
                           (\ref{eq:mgbal}) }, \nonumber 
\end{eqnarray}
which can be solved based on observations of the  
current system state $\{X_k(t), Z_n(t), C(t), W(t), P(t), \alpha_n(t)\}$. 

\section{Optimal Electricity Scheduling} \label{sec:policy}

\subsection{Properties of Optimal Scheduling}

With the Lyapunov penalty-and-drift method, we transform problem~(\ref{eq:qobj}) to problem~(\ref{eq:objdrift}) to be solved for each time slot. The solution only depends on the current system state; there is no need for the statistics of the supply, demand and price processes and no need for any future information. The solution algorithm to this problem is thus an {\em online algorithm}. 
We have the following properties for the optimal scheduling. 
\begin{lemma} \label{lm:lm1}
The optimal solution to problem~(\ref{eq:objdrift}) has the following properties:
\begin{enumerate} \label{enu:enu1}
\item If $Q(t) > 0$, we have $S(t) = 0$,
	\begin{enumerate} 
		\item If $X_k(t) > -VC(t)$, the optimal solution always selects $R_k(t) = 0$;
		if $X_k(t) < -VC(t)$, the optimal	solution always selects $D_k(t) = 0$. \label{item:item1a}
		\item If $Z_n(t) > VC(t) - \alpha_n(t)$, the optimal solution always selects $p_n(t) \geq (1-\delta_n)\alpha_n(t)$; 
		if $Z_n(t) < VC(t) - \alpha_n(t)$, the optimal solution always selects $p_n(t) = 0$. \label{item:item1b}
	\end{enumerate}

\item When $Q(t) = 0$, we have $S(t) > 0$, \label{enu:enu2}
	\begin{enumerate}
		\item If $X_k(t) > -VW(t)$, the optimal solution always selects $R_k(t) = 0$; 
		if $X_k(t) < -VW(t)$, the optimal solution always selects $D_k(t) = 0$.
		\item If $Z_n(t) > VW(t) - \alpha_n(t)$, the optimal solution always selects $p_n(t) \geq (1-\delta_n)\alpha_n(t)$; 
		if $Z_n(t) < VW(t) - \alpha_n(t)$, the optimal solution always selects $p_n(t) = 0$.
	\end{enumerate}
\end{enumerate}
\end{lemma}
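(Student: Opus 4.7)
The plan is to reduce problem~(\ref{eq:objdrift}) to a purely linear objective in the remaining decision variables $R_k(t), D_k(t), p_n(t)$ by using the balance equation~(\ref{eq:mgbal}) to eliminate whichever of $Q(t)$ or $S(t)$ is nonzero, and then to read off the optimal values from the signs of the resulting coefficients. Since~(\ref{eq:sellconst}) forces at most one of $Q(t), S(t)$ to be positive, I would split into the two cases $\{Q(t)>0, S(t)=0\}$ and $\{Q(t)=0, S(t)>0\}$, which match exactly parts~(1) and~(2) of the lemma.

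For part~(1), substituting $Q(t) = \sum_n p_n(t) - P(t) + \sum_k R_k(t) - \sum_k D_k(t)$ into~(\ref{eq:objdrift}) and regrouping yields, up to the constant $-VC(t)P(t)$, an objective of the form
\begin{equation*}
\sum_{n=1}^N [VC(t) - Z_n(t) - \alpha_n(t)] \, p_n(t) + \sum_{k=1}^K [VC(t) + X_k(t)] \, [R_k(t) - D_k(t)].
\end{equation*}
The coefficients of $R_k(t)$ and $D_k(t)$ are exact negatives, so the minimizer sets $R_k(t)=0$ whenever $X_k(t) + VC(t) > 0$ and $D_k(t)=0$ whenever $X_k(t) + VC(t) < 0$; constraint~(\ref{eq:batredecontr}) is automatically respected since on each branch the other battery variable is driven to its lower boundary~$0$. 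For the demand variables, $p_n(t) \in [0,\alpha_n(t)]$ is minimized at $p_n(t) = 0$ when its coefficient is positive and at $p_n(t) = \alpha_n(t)$ when its coefficient is negative; the latter trivially satisfies the weaker bound $p_n(t) \geq (1-\delta_n)\alpha_n(t)$ asserted in the lemma. This reproduces the two thresholds $-VC(t)$ and $VC(t) - \alpha_n(t)$ appearing in part~(1).

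Part~(2) follows by a symmetric substitution: eliminating $S(t) = P(t) + \sum_k D_k(t) - \sum_k R_k(t) - \sum_n p_n(t)$ produces the same linear form with $C(t)$ replaced everywhere by $W(t)$, and the identical sign argument delivers the thresholds $-VW(t)$ and $VW(t) - \alpha_n(t)$.

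The main obstacle is bookkeeping rather than conceptual difficulty: I need to confirm that the substituted expression for $Q(t)$ (respectively $S(t)$) remains consistent with the case assumption of positivity, and that the proposed selections satisfy both the box constraints~(\ref{eq:chadischa}) and the mutual-exclusion constraint~(\ref{eq:batredecontr}). The second point is automatic, because whenever one battery variable is forced to~$0$ by the sign analysis, the other may be chosen freely within its box constraint without violating~(\ref{eq:batredecontr}); the first point simply delimits the region of the state space to which each branch of the lemma applies.
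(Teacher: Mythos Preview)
Your approach is correct and shares the same underlying idea as the paper: use the balance equation~(\ref{eq:mgbal}) to eliminate $Q(t)$ (respectively $S(t)$) and then compare coefficients. The paper chooses to unroll this into a sequence of explicit contradiction arguments, one per sub-case: for instance, to show $R_k(t)=0$ when $X_k(t)>-VC(t)$, it assumes $R_k(t)>0$, rewrites the objective using $R_k(t)=P(t)+Q(t)-\sum_{i\neq k}(R_i-D_i)-\sum_n p_n$, and exhibits an alternative allocation $\tilde Q(t)$ with $R_k(t)=D_k(t)=0$ that achieves a strictly smaller value. Your single global substitution yields exactly the same coefficient $X_k(t)+VC(t)$ and the same conclusion, just without repeating the algebra for each branch. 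The paper handles part~\ref{item:item1b} analogously (contradiction against $p_n(t)<(1-\delta_n)\alpha_n(t)$), and dispatches part~\ref{enu:enu2} by symmetry, as you do.

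The one substantive difference is that your sign analysis actually delivers the sharper conclusion $p_n(t)=\alpha_n(t)$ in the case $Z_n(t)>VC(t)-\alpha_n(t)$, whereas the paper only proves (and the lemma only states) the weaker $p_n(t)\geq(1-\delta_n)\alpha_n(t)$; you correctly note that the stronger fact implies the stated bound. Both proofs tacitly leave the upper bound $Q(t)\leq Q_{max}$ (and $S(t)\leq S_{max}$) unexamined when constructing the improving perturbation; your closing remark about ``delimiting the region of state space'' is the right place to flag that the argument is really a local one, valid because $Q(t)>0$ puts us in the interior of the lower constraint and the improving direction decreases $Q(t)$ in the cases where feasibility could be in doubt.
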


The proof of Lemma~\ref{lm:lm1} is given in Appendix~\ref{app:lm1}. 

\begin{lemma}\label{lm:lm2}
The optimal solution to the battery management problem has the following properties:
\begin{enumerate}
\item If $X_k(t) > -VW_{min}$, the optimal solution always selects $R_k(t) = 0$.
\item If $X_k(t) < -VC_{max}$, the optimal solution always selects $D_k(t) = 0$.
\end{enumerate}
\end{lemma}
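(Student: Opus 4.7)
The plan is to obtain Lemma~\ref{lm:lm2} as a price-uniform corollary of Lemma~\ref{lm:lm1}, by replacing the instantaneous thresholds $-VC(t)$ and $-VW(t)$ with their worst-case counterparts $-VC_{max}$ and $-VW_{min}$, and invoking the standing assumptions $C_{max} \geq W_{max}$ and $C_{min} \geq W_{min}$ from the pricing model.

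First I would observe that constraint~(\ref{eq:sellconst}) partitions every time slot into exactly the two regimes treated by Lemma~\ref{lm:lm1}: either $Q(t) > 0, S(t) = 0$ (Case~1, with threshold $-VC(t)$) or $Q(t) = 0, S(t) \geq 0$ (Case~2, with threshold $-VW(t)$); the corner case $Q(t) = S(t) = 0$ falls under Case~2 and the same monotonicity argument used in the proof of Lemma~\ref{lm:lm1} still applies because the $R_k(t)$ and $D_k(t)$ coefficients in the reduced objective (\ref{eq:objdrift}) are still $X_k(t)$ and $-X_k(t)$. So I only need to argue, for each of the two claims, that the hypothesis on $X_k(t)$ implies the Lemma~\ref{lm:lm1} hypothesis in whichever regime the slot happens to lie.

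Second, for the recharging claim, suppose $X_k(t) > -VW_{min}$. Since $W(t) \in [W_{min}, W_{max}]$, we have $-VW_{min} \geq -VW(t)$, so $X_k(t) > -VW(t)$ and Case~2(a) of Lemma~\ref{lm:lm1} forces $R_k(t) = 0$. In Case~1 we use $C(t) \geq C_{min} \geq W_{min}$ (the latter being one of the assumed price inequalities), giving $-VW_{min} \geq -VC(t)$, so $X_k(t) > -VC(t)$ and Case~1(a) again forces $R_k(t) = 0$. For the discharging claim, suppose $X_k(t) < -VC_{max}$. Then $C(t) \leq C_{max}$ gives $-VC_{max} \leq -VC(t)$ so $X_k(t) < -VC(t)$, invoking Case~1(a). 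And $W(t) \leq W_{max} \leq C_{max}$ (the second assumed price inequality) gives $-VC_{max} \leq -VW(t)$, so $X_k(t) < -VW(t)$, invoking Case~2(a). In both regimes $D_k(t) = 0$ is optimal.

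There is no real obstacle here; the lemma is essentially a uniform restatement of Lemma~\ref{lm:lm1}, and the only thing to be careful about is to verify that the price-bound chain $-VC_{max} \leq -VW(t)$ goes through $W_{max}$ via the assumption $C_{max} \geq W_{max}$, and symmetrically $-VW_{min} \geq -VC(t)$ goes through $C_{min}$ via $C_{min} \geq W_{min}$. Both were explicitly posited in the utility market pricing model, so the argument closes without further work.
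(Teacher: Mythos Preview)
Your proposal is correct and follows essentially the same route as the paper: both derive Lemma~\ref{lm:lm2} directly from Lemma~\ref{lm:lm1} by replacing the time-varying thresholds $-VC(t)$, $-VW(t)$ with their extreme values and invoking the price-ordering assumptions $C_{max}\geq W_{max}$ and $C_{min}\geq W_{min}$. Your write-up is in fact slightly more careful than the paper's, since you explicitly handle the corner case $Q(t)=S(t)=0$ and spell out the inequality chains $-VW_{min}\geq -VC_{min}\geq -VC(t)$ and $-VC_{max}\leq -VW_{max}\leq -VW(t)$, which the paper leaves implicit.
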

The proof of Lemma~\ref{lm:lm2} is given in Appendix~\ref{app:lm2}.

\begin{lemma} \label{lm:lm3}
The optimal solution to the QoSE provisioning problem has the following properties:
\begin{enumerate}
\item If $Z_n(t) > VC_{max}$, the optimal solution always selects $p_n(t) \geq (1-\delta_n)\alpha_n(t)$.
\item If $Z_n(t) < VW_{min} - \alpha_{max}$, the optimal solution always selects $p_n(t) = 0$.
\end{enumerate}
\end{lemma}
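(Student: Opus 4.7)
The plan is to derive Lemma~\ref{lm:lm3} as a direct corollary of Lemma~\ref{lm:lm1}, by showing that the price-independent hypotheses in Lemma~\ref{lm:lm3} are strong enough to imply the price-dependent hypotheses of Lemma~\ref{lm:lm1} for every realization of $C(t)$ and $W(t)$. The optimal policy must fall into exactly one of the two regimes of Lemma~\ref{lm:lm1} (either $Q(t)>0, S(t)=0$, or $Q(t)=0, S(t)\geq 0$), so it suffices to handle each regime separately and conclude that the same $p_n(t)$ decision holds regardless of which one is active.

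For part~(1), assume $Z_n(t) > V C_{\max}$. In the buying regime I would use $C(t) \leq C_{\max}$ together with $\alpha_n(t) \geq 0$ to get $V C(t) - \alpha_n(t) \leq V C_{\max} < Z_n(t)$, which triggers item~\ref{item:item1b} of Lemma~\ref{lm:lm1} and forces $p_n(t) \geq (1-\delta_n)\alpha_n(t)$. In the selling regime I would invoke the standing assumption $C_{\max} \geq W_{\max}$ (from the pricing model in Section~\ref{sec:sys}) to get $V W(t) - \alpha_n(t) \leq V W_{\max} \leq V C_{\max} < Z_n(t)$, which again triggers the corresponding branch of Lemma~\ref{lm:lm1}.

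For part~(2), assume $Z_n(t) < V W_{\min} - \alpha_{\max}$, where $\alpha_{\max} \geq \alpha_n(t)$ for all $n,t$. In the selling regime, $W(t) \geq W_{\min}$ and $\alpha_n(t) \leq \alpha_{\max}$ give $V W(t) - \alpha_n(t) \geq V W_{\min} - \alpha_{\max} > Z_n(t)$, so Lemma~\ref{lm:lm1} forces $p_n(t) = 0$. In the buying regime I would use the assumption $C_{\min} \geq W_{\min}$ to obtain $V C(t) - \alpha_n(t) \geq V C_{\min} - \alpha_{\max} \geq V W_{\min} - \alpha_{\max} > Z_n(t)$, again yielding $p_n(t) = 0$ via Lemma~\ref{lm:lm1}.

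There is essentially no serious obstacle here; the proof is a bookkeeping exercise that leans entirely on Lemma~\ref{lm:lm1} and the price ordering assumptions $C_{\max}\geq W_{\max}$ and $C_{\min}\geq W_{\min}$. The only mild subtlety is making sure the cases $Q(t)>0$ and $Q(t)=0$ are both covered, and being explicit about the role of $\alpha_{\max}$ so that the bound is uniform in $n$ and $t$. Since the inequalities $Z_n(t) > V C_{\max}$ and $Z_n(t) < V W_{\min} - \alpha_{\max}$ dominate the corresponding price-dependent thresholds in both regimes, the conclusions of Lemma~\ref{lm:lm1} transfer unconditionally, giving Lemma~\ref{lm:lm3}.
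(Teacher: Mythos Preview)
Your proposal is correct and follows essentially the same approach as the paper, which explicitly states that the proof ``directly follows Lemma~\ref{lm:lm1} and is similar to the proof of Lemma~\ref{lm:lm2}'' and omits the details. Your argument mirrors the paper's proof of Lemma~\ref{lm:lm2}: you bound the price-dependent thresholds in Lemma~\ref{lm:lm1} using $C(t)\in[C_{\min},C_{\max}]$, $W(t)\in[W_{\min},W_{\max}]$, and the ordering assumptions $C_{\max}\geq W_{\max}$, $C_{\min}\geq W_{\min}$ to obtain price-independent conditions that hold in both the buying and selling regimes.
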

The proof directly follows Lemma~\ref{lm:lm1} and is similar to the proof of Lemma~\ref{lm:lm2}. We omit the details for brevity. 

Lemma~\ref{lm:lm1} provides useful insights for simplifying the algorithm design, which will be discussed in Section~\ref{subsec:alg}. The intuition behind these lemmas is two-fold. On the ESS management side, if either the purchasing price $C(t)$ or the selling price $W(t)$ is low, the MG prefers to recharge the ESS's to store excess electricity for future use. On the other hand, if either $C(t)$ or $W(t)$ is high, the MG is more likely to discharge the ESS's to reduce the amount of energy to purchase or sell more stored energy back to the macrogrid. On the QoSE provisioning side, if either 
$C(t)$ or 
$W(t)$ is high and the quality usage $\alpha_n(t)$ is low, the MG is apt to decline the quality usage for lower operation cost. 
On the other hand, if either $C(t)$ or $W(t)$ is low and $\alpha_n(t)$ is high, the quality usage are more likely to be granted by purchasing more energy or limiting the sell of energy. 

\subsection{MG Optimal Scheduling Algorithm} \label{subsec:alg}

In this section, we present the MG control policy $\mathbb{A}(t)$ to solve problem~(\ref{eq:objdrift}).
Given the current virtual queue state $\{X_k(t), Z_n(t)\}$, market prices $C(t)$ and $W(t)$, quality usage $\alpha_n(t)$ and available energy $P(t)$ from the DRERS for serving quality usage, 
problem~(\ref{eq:objdrift}) can be decomposed into the following two linear programming (LP) sub-problems 
(since one of $S(t)$ and $Q(t)$ must be zero, see~(\ref{eq:sellconst})). 
\begin{eqnarray} \label{eq:objdriftsub1}
\hspace{-0.1in} \mbox{minimize:} && \hspace{-0.25in} VQ(t)C(t) + \sum_{k=1}^K [X_k(t)(R_k(t) - D_k(t))] -\nonumber\\
                 && \hspace{-0.25in} \sum_{n=1}^N((Z_n(t) + \alpha_n(t))p_n(t)) \\
\hspace{-0.1in} \mbox{s.t.} && \hspace{-0.25in} S(t) = 0, \mbox{(\ref{eq:chadischa}), (\ref{eq:batredecontr}), (\ref{eq:resbal}), (\ref{eq:mgbal}). } \nonumber
\end{eqnarray}
\begin{eqnarray} \label{eq:objdriftsub2}
\hspace{-0.1in} \mbox{minimize:} && \hspace{-0.25in} -VS(t)W(t) + \sum_{k=1}^K [X_k(t)(R_k(t) - D_k(t))] -\nonumber\\
                 && \hspace{-0.25in} \sum_{n=1}^N((Z_n(t) + \alpha_n(t))p_n(t)) \\
\hspace{-0.1in} \mbox{s.t.~} && \hspace{-0.25in} Q(t) = 0, \mbox{(\ref{eq:chadischa}), (\ref{eq:batredecontr}), (\ref{eq:resbal}), (\ref{eq:mgbal}). } \nonumber
\end{eqnarray}

In sub-problem~(\ref{eq:objdriftsub1}), we set $R_k(t) = 0$ if $X_k(t) > -VC(t)$, and $D_k(t) = 0$ if $X_k(t) < -VC(t)$ according to Lemma~\ref{lm:lm1}. Also, if $Z_n(t) < VC(t) - \alpha_n(t)$, we set $p_n(t) = 0$; otherwise, we reset constraint~(\ref{eq:resbal}) to a smaller search space of $(1-\delta_n)\alpha_n(t) \leq p_n(t) \leq \alpha_n(t)$. We take a similar approach for solving 
sub-problem~(\ref{eq:objdriftsub2}) by replacing $C(t)$ with $W(t)$. 
Then we compare the objective values of the two sub-problems and select the more competitive solution as the MG control policy. The complete algorithm is presented in Algorithm~\ref{tab:algorithm}. 

\begin{algorithm} [!t]
\SetAlgoLined
   MGCC initializes the QoSE target to $\delta_n$ and the virtual queues backlogs $Z_n(t)$ and $X_k(t)$, for all $n$ and $k$ \;
   \While{TRUE}{
   		Residents send usage request (with basic and quality usage) to MGCC via the information network \;
   		MGCC solves LPs~(\ref{eq:objdriftsub1}) and (\ref{eq:objdriftsub2}) \;
   		MGCC selects the optimal solution $\mathbb{A}(t)$ comparing the solutions 
   		   to~(\ref{eq:objdriftsub1}) and (\ref{eq:objdriftsub2}) \;
   		MGCC updates the virtual queues $X_k(t)$ and $Z_n(t)$ according to~(\ref{eq:virtualbat}) and~(\ref{eq:vqueue}), 
   		   for all $n$ and $k$ \;
   }
\caption{Adaptive Electricity Scheduling Algorithm}
\label{tab:algorithm}
\end{algorithm}


\subsection{Performance Analysis} \label{subsec:perf}

The proposed scheduling algorithm dynamically balances cost minimization and QoSE provisioning. It only requires current system state information (i.e., as an online algorithm) and requires no statistic information about the random supply, demand, and price processes. The algorithm is also robust to non-i.i.d. and non-ergodic behaviors of the processes~\cite{Neely05}. 

\begin{theorem} \label{propo:battery}
The constraint on the ESS battery level $E_k(t)$, $E_k^{min} \leq E_k(t) \leq E_k^{max}$, is always satisfied for all $k$ and $t$. 
\end{theorem}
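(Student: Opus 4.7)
The plan is to prove Theorem~\ref{propo:battery} by induction on $t$, leveraging Lemma~\ref{lm:lm2} to control when the battery can be recharged or discharged, and then translating the resulting bounds on the virtual queue $X_k(t)$ back to bounds on the physical battery level $E_k(t)$ via the identity $E_k(t) = X_k(t) + D_k^{max} + E_k^{min} + VC_{max}$.

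First I would show by induction that the virtual queue satisfies the two-sided bound
\begin{equation*}
-VC_{max} - D_k^{max} \;\leq\; X_k(t) \;\leq\; -VW_{min} + R_k^{max}, \quad \forall\, k,t,
\end{equation*}
provided the initialization $E_k(0)$ is chosen so that $X_k(0)$ falls in this interval. For the upper bound, I split into two cases based on Lemma~\ref{lm:lm2}. If $X_k(t) > -VW_{min}$, then the lemma forces $R_k(t)=0$, so from the virtual queue dynamics~(\ref{eq:virtualbat}) we get $X_k(t+1) = X_k(t) - D_k(t) \leq X_k(t)$, keeping us within the claimed range. If $X_k(t) \leq -VW_{min}$, then using only the crude recharge bound $R_k(t) \leq R_k^{max}$ and $D_k(t) \geq 0$ gives $X_k(t+1) \leq -VW_{min} + R_k^{max}$. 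The lower bound is symmetric: if $X_k(t) < -VC_{max}$, Lemma~\ref{lm:lm2} forces $D_k(t)=0$ so $X_k(t+1)\geq X_k(t)$; otherwise $X_k(t+1) \geq -VC_{max} - D_k^{max}$.

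Once the two-sided bound on $X_k(t)$ is established, I would substitute into the definition~(\ref{eq:batteryque}). The lower bound gives $E_k(t) \geq E_k^{min}$ immediately. The upper bound gives
\begin{equation*}
E_k(t) \;\leq\; E_k^{min} + R_k^{max} + D_k^{max} + V(C_{max} - W_{min}),
\end{equation*}
and here the constraint $V \leq V_{max} = \min_k \{(E_k^{max} - E_k^{min} - R_k^{max} - D_k^{max})/(C_{max}-W_{min})\}$ from Section~\ref{subsubsec:vqs} is precisely what is needed to conclude $E_k(t) \leq E_k^{max}$. This is the key place where the otherwise mysterious definition of $V_{max}$ is used.

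The main obstacle is not any deep calculation but rather the case analysis on the recharge/discharge thresholds and the careful bookkeeping to make sure the inductive step covers both the situation where Lemma~\ref{lm:lm2} forces one of $R_k(t),D_k(t)$ to zero and the situation where neither is forced. A secondary subtlety is that the theorem is implicitly a statement about the algorithm's feasibility, so I must verify that the initial state $E_k(0)$ is assumed (or can be chosen) in $[E_k^{min} + VC_{max}, \; E_k^{min} + VC_{max} - VW_{min} + R_k^{max} + D_k^{max}]$ for the induction base case; otherwise an initialization clause needs to be stated explicitly before the induction proceeds.
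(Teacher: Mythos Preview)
Your proposal is correct and follows essentially the same induction-plus-Lemma~\ref{lm:lm2} argument as the paper. The only cosmetic difference is that the paper carries out the induction directly on the interval $-VC_{max}-D_k^{max}\le X_k(t)\le E_k^{max}-VC_{max}-D_k^{max}-E_k^{min}$ (the exact translate of $[E_k^{min},E_k^{max}]$) and invokes $V\le V_{max}$ \emph{inside} the inductive step; this makes the base case simply $E_k(0)\in[E_k^{min},E_k^{max}]$ and sidesteps the initialization subtlety you flag. (Incidentally, your stated initialization interval has a slip: the lower endpoint should be $E_k^{min}$, not $E_k^{min}+VC_{max}$.)
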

\begin{proof}
From the battery virtual queue definition~(\ref{eq:batteryque}), the constraint $E_k^{min} \leq E_k(t) \leq E_k^{max}$ is equivalent to 
$$
  -VC_{max} - D_k^{max} \hspace{-0.015in} \leq \hspace{-0.015in} X_k(t) \hspace{-0.015in} \leq \hspace{-0.015in} E_k^{max} - VC_{max} - D_k^{max} - E_k^{min}. 
$$ 
We assume all the batteries satisfy the battery capacity constraint at the initial time $t = 0$, i.e., $E_k^{min} \leq E_k(0) \leq E_k^{max}$, for all $k$.  Supposing the inequalities hold true for time $t$, we then show the inequalities still hold true for time $t+1$.

First, we show $X_k(t+1) \leq E_k^{max} - VC_{max} - D_k^{max} - E_k^{min}$. If $-VW_{min} < X_k(t) \leq E_k^{max} - VC_{max} - D_k^{max} - E_k^{min}$, then with $X_k(t) > -VW_{min} \Rightarrow R_k(t) = 0$ from Lemma~\ref{lm:lm2}, we have $X_k(t+1) = X_k(t) - D_k(t) \leq X_k(t) \leq E_k^{max} - VC_{max} - D_k^{max} - E_k^{min}$. If $X_k(t) \leq -VW_{min}$, then the largest value is $X_k(t+1)=-VW_{min} + R_k^{max}$. For any $0 < V \leq V_{max}$, we have
\begin{eqnarray}
&&\hspace{-0.1in} E_k^{max} - VC_{max} - D_k^{max} - E_k^{min} \nonumber\\
&\hspace{-0.1in}\geq& \hspace{-0.1in} E_k^{max} - \min_{k} \left\{\frac{E_k^{max} - E_k^{min} - R_k^{max} - D_k^{max}}{C_{max} - W_{min}} \right\} C_{max} \nonumber\\
&&\hspace{-0.1in} - D_k^{max} - E_k^{min} \geq E_k^{min} + R_k^{max} \geq X_k(t+1). \nonumber 
\end{eqnarray}
It follows that $X_k(t+1) \leq E_k^{max} - VC_{max} - D_k^{max} - E_k^{min}$.

Next, we show $X_k(t+1) \geq -VC_{max} - D_k^{max}$. Assuming $-VC_{max} - D_k^{max} \leq X_k(t) \leq -VC_{max}$, then from Lemma~\ref{lm:lm2}, we have $X_k(t) \leq -VC_{max} \Rightarrow D_k(t) = 0$. It follows that
$$
  X_k(t+1) = X_k(t) + R_k(t) \geq X_k(t) \geq -VC_{max} - D_k^{max}. 
$$  
If $X_k(t) \geq -VC_{max}$, following~(\ref{eq:batteryque}), we have
\begin{eqnarray}
X_k(t+1) &=& X_k(t) - D_k(t) + R_k(t) \; \geq \; X_k(t) - D_k^{max} \nonumber\\
         &\geq& -VC_{max} - D_k^{max}. \nonumber
\end{eqnarray}
Therefore, we have $X_k(t+1) \geq -VC_{max} - D_k^{max}$. Thus the inequalities also hold true for time $t+1$. 

It follows that 
$E_k^{min} \leq E_k(t) \leq E_k^{max}$ is satisfied under the optimal scheduling algorithm for all $k$, $t$.  
\end{proof}


\begin{theorem} \label{propo:qoubound}
The worst-case backlogs of the QoSE virtual queue for each resident $n$ is bounded by $Z_n(t) \leq Z_n^{max} = VC_{max} + \alpha_n^{max}$, for all $n$, $t$. Moreover, the worst-case 
average amount of outage of quality usage for resident $n$ in a period $T$ is upper bounded by $Z_n^{max} + T \delta_n \alpha_n^{max}$.
\end{theorem}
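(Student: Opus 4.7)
The plan is to prove the two bounds sequentially: the pointwise bound $Z_n(t) \leq Z_n^{max} = VC_{max} + \alpha_n^{max}$ by induction on $t$, and the cumulative outage bound by a one-line telescoping argument applied to the recursion~(\ref{eq:vqueue}).

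For the pointwise bound, I would start the induction from the initialization $Z_n(0) = 0$, so the base case is immediate, and then assume $Z_n(t) \leq VC_{max} + \alpha_n^{max}$ and split the inductive step into two regimes. In the \emph{low-queue regime} where $Z_n(t) \leq VC_{max}$, the crude bound $I_n(t) = \alpha_n(t) - p_n(t) \leq \alpha_n(t) \leq \alpha_n^{max}$ together with $[Z_n(t) - \delta_n\alpha_n(t)]^+ \leq Z_n(t)$ yields $Z_n(t+1) \leq Z_n(t) + I_n(t) \leq VC_{max} + \alpha_n^{max}$. In the \emph{high-queue regime} where $VC_{max} < Z_n(t) \leq VC_{max} + \alpha_n^{max}$, I invoke Lemma~\ref{lm:lm3}, which forces $p_n(t) \geq (1-\delta_n)\alpha_n(t)$, hence $I_n(t) \leq \delta_n \alpha_n(t)$. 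When $Z_n(t) \geq \delta_n\alpha_n(t)$ the drift term does not clip and $Z_n(t+1) \leq Z_n(t) - \delta_n\alpha_n(t) + \delta_n\alpha_n(t) = Z_n(t) \leq Z_n^{max}$; when $Z_n(t) < \delta_n\alpha_n(t)$ the $[\cdot]^+$ wipes out the drift and $Z_n(t+1) \leq I_n(t) \leq \delta_n \alpha_n^{max} \leq Z_n^{max}$. In either subcase the inductive hypothesis is preserved.

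For the cumulative outage bound, the key observation is that the recursion~(\ref{eq:vqueue}) implies $Z_n(\tau+1) \geq Z_n(\tau) - \delta_n\alpha_n(\tau) + I_n(\tau)$ for every $\tau$ (dropping the $[\cdot]^+$ can only decrease the right-hand side). Summing for $\tau = 0, \ldots, T-1$ and telescoping gives $\sum_{\tau=0}^{T-1} I_n(\tau) \leq Z_n(T) - Z_n(0) + \delta_n \sum_{\tau=0}^{T-1} \alpha_n(\tau)$. Applying the pointwise bound to $Z_n(T)$, using $\alpha_n(\tau) \leq \alpha_n^{max}$ in the sum, and dropping the non-negative $Z_n(0)$ from the right, I obtain the claimed bound $Z_n^{max} + T\delta_n\alpha_n^{max}$ on the worst-case total outage over a window of $T$ slots.

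The main obstacle is the high-queue case of the induction: without Lemma~\ref{lm:lm3} there would be nothing to prevent a fresh arrival of size $\alpha_n^{max}$ from pushing $Z_n$ past $Z_n^{max}$, so the whole argument hinges on the structural property that once the virtual queue crosses $VC_{max}$ the scheduler is guaranteed to serve at least the $(1-\delta_n)$-fraction of the current request. Once this step is in place, both the per-slot bound and the telescoped cumulative bound follow from routine manipulation of the queue recursion.
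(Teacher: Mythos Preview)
Your proof is correct and follows essentially the same route as the paper: induction on $t$ with the two-regime split around the threshold $VC_{max}$, invoking Lemma~\ref{lm:lm3} in the high-queue case, and then a telescoping of~(\ref{eq:vqueue}) for the cumulative bound. The only cosmetic difference is that the paper carries out part~(ii) over an arbitrary window $[t_1,t_2]$ of length $T$ (dropping the non-negative $Z_n(t_1)$ rather than $Z_n(0)$), whereas you anchor at $t=0$; your telescoping argument works verbatim on a general window, so this is not a gap.
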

\begin{proof}
(i) We first prove the upper bound $Z_n^{max}$. Initially, we have $Z_n(0) = 0 \leq VC_{max} + \alpha_n^{max}$. Assume that in time slot $t$ the backlog of the QoSE virtual queue of resident $n$ satisfies $Z_n(t) \leq Z_n^{max} = VC_{max} + \alpha_n^{max}$. We then check the backlog at time $t+1$ and show the bound still holds true. 

If $Z_n(t) > VC_{max}$, following Lemma~\ref{lm:lm3}, the optimal scheduling for the quality usage of resident $n$ satisfies $p_n(t) \geq (1-\delta_n)\alpha_n(t)$. From the virtual queue dynamics~(\ref{eq:vqueue}), we have
\begin{eqnarray} 
Z_n(t+1) \leq [Z_n(t) - \delta_n\alpha_n(t)]^+ + \delta_n\alpha_n(t). \nonumber
\end{eqnarray} 
If $Z_n(t) \geq \delta_n\alpha_n(t)$, we have $Z_n(t+1) \leq Z_n(t) \leq VC_{max} + \alpha_n^{max}$; otherwise, 
it follows that $Z_n(t+1) \leq \delta_n\alpha_n(t) < VC_{max} + \alpha_n^{max}$.

If $Z_n(t) \leq VC_{max}$, we have $Z_n(t+1) \leq [Z_n(t) - \delta_n\alpha_n(t)]^+ + \alpha_n^{max}$. If $Z_n(t) \geq \delta_n\alpha_n(t)$, we have $Z_n(t+1) \leq Z_n(t) - \delta_n\alpha_n(t) + \alpha_n^{max} \leq VC_{max} + \alpha_n^{max}$;
otherwise, we have $Z_n(t+1) \leq \alpha_n^{max} \leq VC_{max} + \alpha_n^{max}$.

Thus we have $Z_n(t+1) \leq Z_n^{max} = VC_{max} + \alpha_n^{max}$. The proof of the QoSE virtual queue backlog bound is completed.

(ii) Consider an interval $[t_1, t_2]$ with length of $T$. Summing~(\ref{eq:vqueue}) from $t_1$ to $t_2$, we have
$Z_n(t_2+1) \geq Z_n(t_1) -\delta_n \sum_{\tau=t_1}^{t_2}\alpha_n(\tau) + \sum_{\tau=t_1}^{t_2}[\alpha_n(\tau) - p_n(\tau)] \geq \sum_{\tau=t_1}^{t_2}[\alpha_n(\tau)-p_n(\tau)] - T \delta_n \alpha_n^{max}$. 
It follows that 
$$
  \sum_{\tau=t_1}^{t_2}[\alpha_n(\tau)-p_n(\tau)] \leq Z_n^{max} + T\delta_n\alpha_n^{max}.
$$ 
\end{proof}

\begin{theorem} \label{propo:objbound}
The average MG operation cost under the adaptive electricity scheduling algorithm in Algorithm\ref{tab:algorithm}, $\hat{y}$, 
is bounded as $y^* \leq \hat{y} \leq y^* + B^*/V$, where $y^*$ is optimal operating cost and $B^* = B + \sum_{n=1}^NZ_n^{max}(1-\delta_n)\alpha_n^{max}$. 
\end{theorem}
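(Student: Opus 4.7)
The plan is the classical drift-plus-penalty analysis from Lyapunov optimization. The lower bound $y^* \leq \hat{y}$ is immediate: Theorems~\ref{propo:battery} and~\ref{propo:qoubound} certify that the trajectory generated by Algorithm~\ref{tab:algorithm} satisfies every constraint of problem~(\ref{eq:obj}), and $y^*$ is by definition the infimum over all such feasible trajectories. All the work is therefore in the upper bound $\hat{y} \leq y^* + B^*/V$.

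I would start from the drift-plus-penalty bound~(\ref{eq:driftpen}) that is already derived from~(\ref{eq:drift}). On its right-hand side the only queue-scaled term that is not multiplied by a control variable is $\sum_n Z_n(t)(1-\delta_n)\alpha_n(t)$; I would replace it by its deterministic worst-case value $\sum_n Z_n^{max}(1-\delta_n)\alpha_n^{max}$ using Theorem~\ref{propo:qoubound} together with $\alpha_n(t)\leq\alpha_n^{max}$. Folding this into the additive constant produces exactly $B^* = B + \sum_n Z_n^{max}(1-\delta_n)\alpha_n^{max}$ and leaves a right-hand side whose remaining control-dependent portion is precisely the conditional expectation of the per-slot objective of sub-problem~(\ref{eq:objdrift}), which Algorithm~\ref{tab:algorithm} is constructed to minimize.

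Next I would invoke the existence, for every $\epsilon > 0$, of a stationary randomized comparison policy $\Pi^*_\epsilon$ whose decisions depend only on the observed randomness $\{C(t), W(t), P(t), \alpha_n(t)\}$ (and not on the queue state), that achieves long-run average cost at most $y^* + \epsilon$ while keeping both virtual queues mean-rate stable; this is the standard Caratheodory-type construction over the relaxed problem~(\ref{eq:qobj}). Because Algorithm~\ref{tab:algorithm} minimizes~(\ref{eq:objdrift}) slot-by-slot, its conditional expectation of that objective is no larger than the one achieved by $\Pi^*_\epsilon$. Under $\Pi^*_\epsilon$ the $X_k(t)$- and $Z_n(t)$-weighted terms contribute conditional expectations consistent with queue stability (nonpositive in the mean), and the $V$-scaled cost term has conditional expectation $V(y^* + \epsilon)$, assembling to
\begin{equation*}
\Delta(\vec{\Theta}(t)) + V\,\mathbb{E}\{Q(t)C(t) - S(t)W(t)\mid\vec{\Theta}(t)\} \leq B^* + V(y^* + \epsilon).
\end{equation*}

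Finally I would take unconditional expectations, sum from $\tau = 0$ to $T-1$, telescope $\sum_\tau \mathbb{E}\{L(\vec{\Theta}(\tau+1)) - L(\vec{\Theta}(\tau))\} = \mathbb{E}\{L(\vec{\Theta}(T))\} - L(\vec{\Theta}(0))$, divide through by $VT$, and use $L(\vec{\Theta}(T)) \geq 0$ together with $L(\vec{\Theta}(0)) < \infty$ to arrive at
\begin{equation*}
\frac{1}{T}\sum_{\tau=0}^{T-1} \mathbb{E}\{Q(\tau)C(\tau) - S(\tau)W(\tau)\} \leq y^* + \epsilon + \frac{B^*}{V} + \frac{L(\vec{\Theta}(0))}{VT}.
\end{equation*}
Letting $T \to \infty$ and then $\epsilon \to 0$ delivers the claim. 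The main obstacle I anticipate is constructing and justifying the stationary randomized $\Pi^*_\epsilon$ when the price/supply/demand processes are permitted to be non-i.i.d.\ (as Section~\ref{subsec:perf} advertises); the standard remedy is a $T$-slot look-ahead or dual-based argument, and one must be careful to propagate the $\epsilon$-slack additively through the telescoping so that it vanishes cleanly in the limit.
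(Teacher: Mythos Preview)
Your proposal is correct and follows essentially the same drift-plus-penalty comparison argument as the paper: bound the $Z_n(t)(1-\delta_n)\alpha_n(t)$ term via Theorem~\ref{propo:qoubound} to produce $B^*$, compare the minimized per-slot objective against a stationary randomized policy achieving cost near $y^*$ with $\mathbb{E}\{R_k(t)-D_k(t)\}=0$, drop the nonpositive $-(Z_n+\alpha_n)p_n$ term, telescope, and pass to the limit. The only cosmetic difference is that the paper makes the comparison policy concrete by writing down an explicit relaxed problem (replacing the hard battery bound~(\ref{eq:batlimit}) with the time-average constraint $\overline{R_k}=\overline{D_k}$ derived from Theorem~\ref{propo:battery}) and citing the stationary-randomized optimality result directly, whereas you carry an $\epsilon$-slack and send $\epsilon\to 0$; both routes are standard and equivalent in this setting.
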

\begin{proof}
From Theorem~\ref{propo:battery}, the battery capacity constraints is met in each time slot with the 
adaptive control policy. Take expectation on~(\ref{eq:batdyn})  
and sum it over the period $[0, t-1]$:
$$
  \mathbb{E}\{E_k(t)\} \hspace{-0.0125in} - \hspace{-0.0125in} \mathbb{E}\{E_k(0)\} = \sum_{\tau = 0}^{t-1}[\mathbb{E}\{R_k(\tau)\} \hspace{-0.0125in} - \hspace{-0.0125in} \mathbb{E}\{D_k(\tau)\}], \; \forall \; k.
$$
Since $E_k^{min} \leq E_k (t) \leq E_k^{max}$, we divide both sides 
by $t$ and 
let $t$ go to infinity, to obtain 
\begin{equation} \label{eq:batavg}
\lim_{t\rightarrow\infty}\frac{1}{t}\sum_{\tau=0}^{t-1}\mathbb{E}\{R_k(\tau)\} = \lim_{t\rightarrow\infty}\frac{1}{t}\sum_{\tau=0}^{t-1}\mathbb{E}\{D_k(\tau)\}, \; \forall \; k.
\end{equation}

Consider the the following relaxed version of problem~(\ref{eq:obj}). 
\begin{eqnarray} \label{eq:relaxobj}
\mbox{minimize:} && \hspace{-0.2in} \lim_{t\rightarrow\infty}\frac{1}{t}\sum_{\tau=0}^{t-1}\mathbb{E}\{Q(\tau)C(\tau) - S(\tau)W(\tau)\} \\
     \mbox{s.t.} && \hspace{-0.2in} \mbox{ (\ref{eq:chadischa}), (\ref{eq:batredecontr}), (\ref{eq:qouconst}), (\ref{eq:resbal}), (\ref{eq:sellconst}), (\ref{eq:mgbal}), and (\ref{eq:batavg})}. \nonumber
\end{eqnarray}
Since the constraints in problem~(\ref{eq:relaxobj}) are relaxed from that in problem~(\ref{eq:obj}), the optimal solution to problem~(\ref{eq:obj}) is 
also feasible for problem~(\ref{eq:relaxobj}). The solution of~(\ref{eq:relaxobj}) does not depend on 
battery energy levels. Let the optimal solution for problem~(\ref{eq:relaxobj}) be $\hat{\mathbb{A}}(t) = \{\hat{Q}(t), \hat{S}(t), \hat{R_k}(t), \hat{D_k}(t), \hat{p}_n(t)\}$ and the corresponding object value is $\hat{y} \leq y^*$. 
According to the properties of optimality of stationary and randomized policies~\cite{Neely08}, the optimal solution $\hat{\mathbb{A}}(t)$ satisfies $\mathbb{E}\{\hat{R}_k(t) - \hat{D}_k(t)\} = 0$ and $\hat{y} = \mathbb{E}\{\hat{Q}(\tau)C(\tau) - \hat{S}(\tau)W(\tau)\}$.

We substitute solution $\hat{\mathbb{A}}(t)$ into the right-hand-side of the drift-and-penalty~(\ref{eq:driftpen}). Since our proposed policy minimizes the right-hand-side of~(\ref{eq:driftpen}), we have
\begin{eqnarray} \label{eq:objbound1}
&&\Delta(\vec{\Theta}(t)) + V\mathbb{E}\{Q(t)C(t) - S(t)W(t)|\vec{\Theta}(t)\} \nonumber\\
&\leq& B + \sum_{n=1}^N\mathbb{E}\{Z_n(t)(1-\delta_n)\alpha_n(t)|Z_n(t)\} + \nonumber\\
       & & \sum_{k=1}^KX_k(t)\mathbb{E}\{\hat{R}_k(t) - \hat{D}_k(t)|X_k(t)\} - \nonumber \\
       & & \sum_{n=1}^N(Z_n(t) + \alpha_n(t))\mathbb{E}\{\hat{p}_n(t)|Z_n(t)\} + \nonumber\\
       & & V\mathbb{E}\{\hat{Q}(t)C(t) - \hat{S}(t)W(t)|\vec{\Theta}(t)\} \nonumber \\
       &\leq& B+\sum_{n=1}^NZ_n^{max}(1-\delta_n)\alpha_n^{max} + V \cdot y^* \nonumber.
\end{eqnarray}
The second inequality is due to $\mathbb{E}\{\hat{R}_k(t) - \hat{D}_k(t)\} = 0$, $0 \leq Z_n(t) \leq Z_n^{max}$, $\alpha_n(t) \geq 0$, $p_n(t) \geq 0$, and $\hat{y} \leq y^*$. Taking expectation and sum up from $0$ to $T-1$, we obtain
\begin{eqnarray}
&&\sum_{t=0}^{T-1}V\mathbb{E}\{Q(t)C(t) - S(t)W(t)\} \nonumber\\
&\leq& T\cdot B^* + T\cdot V \cdot y^* - \mathbb{E}\{L(\vec{\Theta}(T))\} + \mathbb{E}\{L(\vec{\Theta}(0))\} \nonumber \\
&\leq& T\cdot B^* + T\cdot V \cdot y^* + \mathbb{E}\{L(\vec{\Theta}(0))\}. \nonumber
\end{eqnarray}
The second inequality is due to the nonnegative property of Lyapunove functions. Divide both sides by $V\cdot T$ and 
let $T$ go to infinity. Since the initial system state $\vec{\Theta}(0)$ is finite, we have
$\lim_{T\rightarrow\infty}\frac{1}{T}\sum_{t=0}^{T-1}V\mathbb{E}\{Q(t)C(t) - S(t)W(t)\} \leq y^* + \frac{B^*}{V}$.  
\end{proof}

It is worth noting that the choice of $V$ controls the optimality of the proposed algorithm. Specifically, a larger $V$ leads to a tighter optimality gap. However, from the proof of  
Theorem~\ref{propo:battery}, $V$ is limited by $V_{max}$, which ensures the feasibility of the battery constraints. This is actually a similar phenomenon to the so-called {\em performance-congestion trade-off}~\cite{Neely08}. Through the definition of $V_{max}$ (see Section~\ref{subsubsec:vqs}), it can be seen that if we invest more on the individual storage components for a larger ESS capacity, the proposed algorithm can achieve a better performance (i.e., a smaller optimality gap).

It is also worth noting that all the performance bounds of the proposed algorithm are deterministic, which provide ``hard'' guarantees for the performance of the proposed adaptive scheduling policy in every time slot.
Unlike probabilistic approaches, the proposed method provides useful guidelines for the MG design, while guaranteeing the MG operation cost, grid stability, and the usage quality of residents. 
\section{Simulation Study} \label{sec:sim}

We demonstrate the performance of the proposed adaptive MG electricity scheduling algorithm through extensive simulations. We simulated an MG with $500$ residents, where the electricity from DRERs is supplied by a wind turbine plant. We use the renewable energy supply data from the Western Wind Resources Dataset published by the National Renewable Energy Laboratory~\cite{WWR}. The ESS's consists of $100$ PHEV Li-ion battery packs, each of which has a maximum capacity of $16$ kWh and the minimum energy level is $0$. The battery can be fully charged or discharged within $2$ hours~\cite{Peterson10}. 

The residents' pre-agreed power demand is uniformly distributed in [$2$ kW, $25$ kW], and the quality usage power is uniformly distributed in [$0$, $10$ kW]. The MG works in the grid-connected mode and may purchase/sell electricity from/to the macrogrid. The utility prices in the macrogrid are obtained from~\cite{ERCOT} and are time-varying. We assume the sell price by the broker is random and below the purchasing price in each time slot. The time slot duration is $15$ minutes. The MGCC serves a certain level of quality usage according to the adaptive electricity scheduling policy. The QoSE target is set to $\delta_n = 0.07$ for all residents. The control parameter is $V = V_{max}$, unless otherwise specified. 

\subsection{Algorithm Performance} 

We first investigate the average QoSEs and total MG operation cost with default settings for a five-day period.  We use MATLAB LP solver for solving the sub-problems (\ref{eq:objdriftsub1}) and (\ref{eq:objdriftsub2}). 
For better illustration, we only show the QoSEs of three randomly chosen users in Fig.~\ref{fig:vmaxqou}. It can be seen that all the average QoSEs converge to the neighborhood of $0.08$ within $200$ slots, which is close to the MG requested criteria $\delta_n=0.07$. In fact the proposed scheme converges exponentially, due to the inherent exponential convergence property in Lyapunov stability based design~\cite{Slotine91}. 

\begin{figure*}[!t]
	\begin{minipage}[t]{.32\linewidth}
		\centering  
		\includegraphics[width=2.1in, height=1.7in]{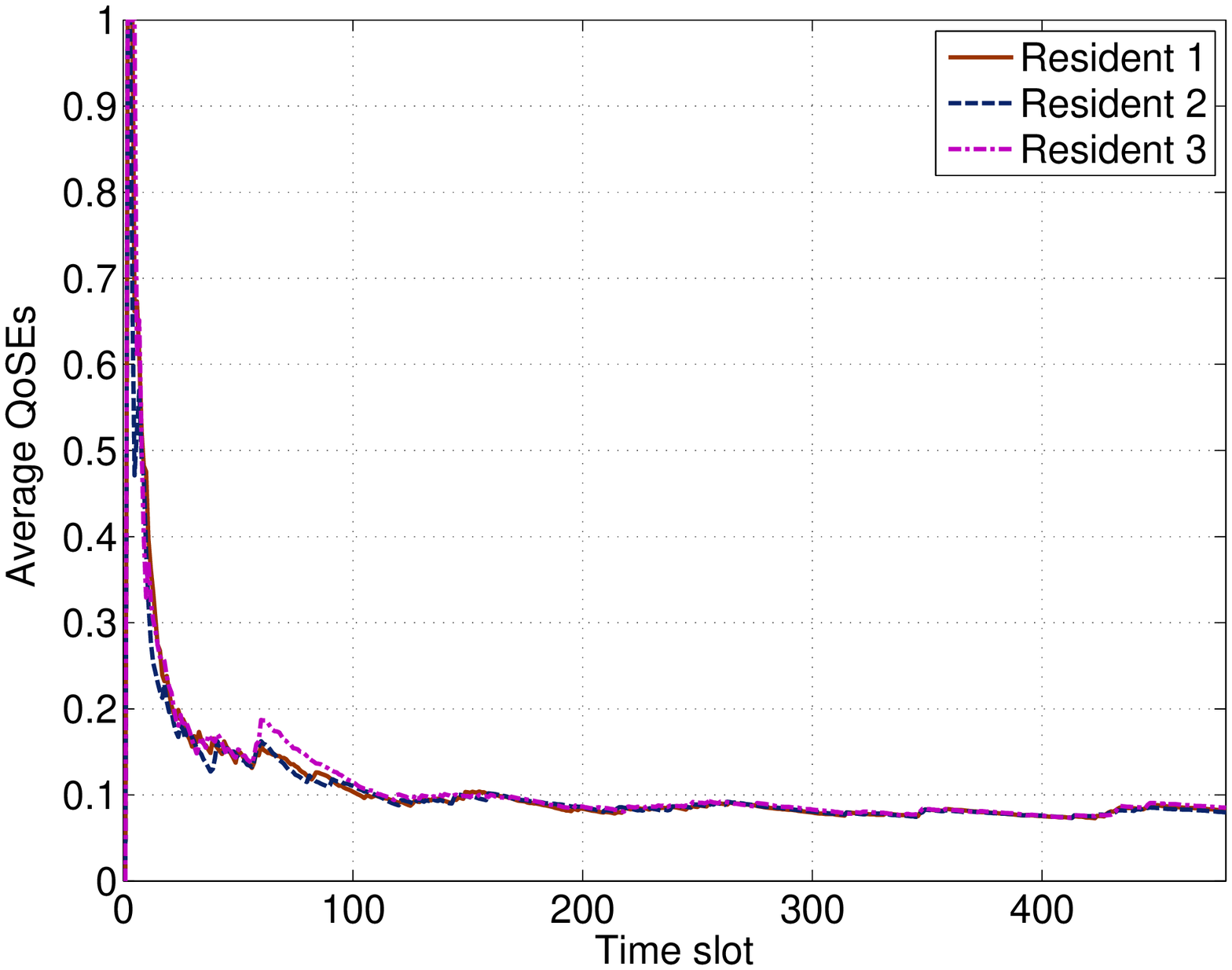}
		\caption{\small Average QoSEs of three residents ($V = V_{max}$).}
		\label{fig:vmaxqou}
		\vspace{-0.1in}
	\end{minipage}
	\hspace{0.05in}
	\begin{minipage}[t]{.32\linewidth}
		\centering  
		\includegraphics[width=2.1in, height=1.7in]{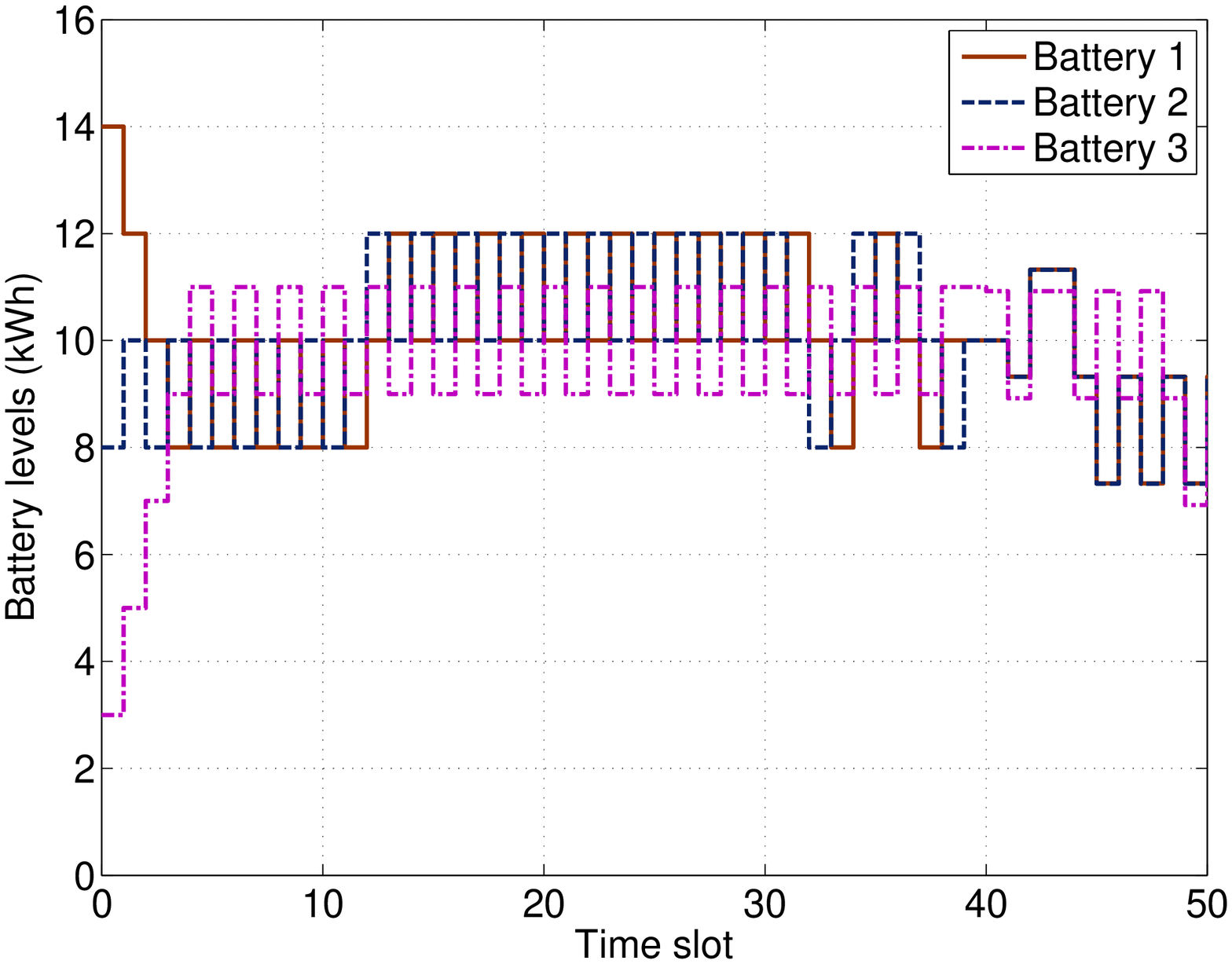}
		\caption{\small Energy levels of three Li-ion batteries ($V = V_{max}$).}
		\label{fig:vmaxbat}
		\vspace{-0.1in}
	\end{minipage}
	\hspace{0.05in}
	\begin{minipage}[t]{.32\linewidth}
		\centering  	
		\includegraphics[width=2.1in, height=1.7in]{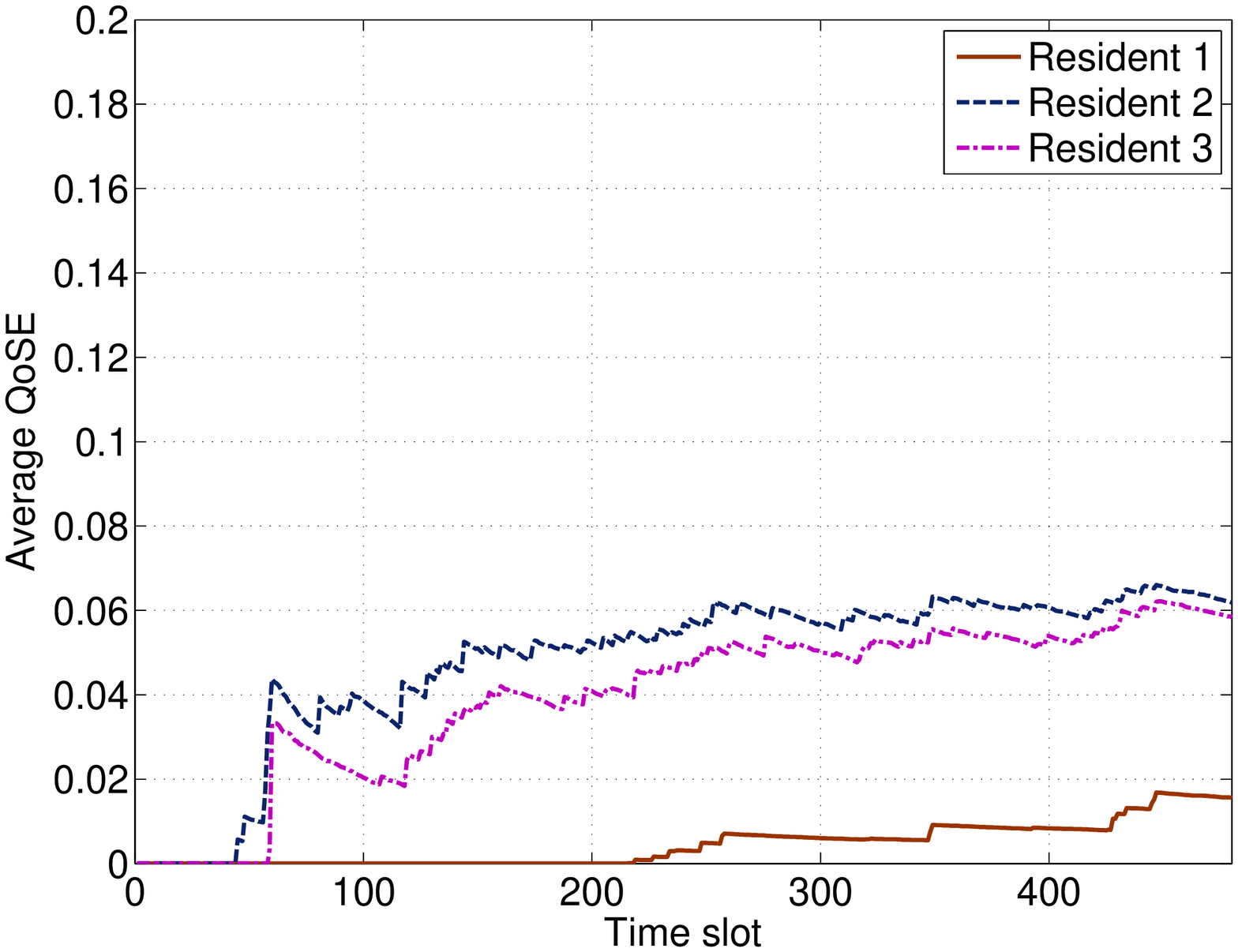} 
		\caption{\small QoSEs for three residents with different service contracts ($V = V_{max}/2$).}
		\vspace{-0.1in}
		\label{fig:hetroqou}
	\end{minipage}	
\end{figure*}

We also plot the MG operation traces from this simulation in Fig.~\ref{fig:vmaxele}. The energy for serving quality usage from the DEREs are plotted in Fig.~\ref{fig:vmaxele}(A). It can be seen that the DRERs generate excessive electricity from slot $150$ to $200$,  which is more than enough for the residents. Thus, the MGCC sells more electricity back to the macrogrid and obtains significant cost compensation accordingly.  In Fig.~\ref{fig:vmaxele}(B), 
we plot the traces of electricity trading, where the positive values are the purchased electricity (marked as brown bars), and the negative values represent the sold electricity (marked as dark blue bars). 
The MG operation costs are plotted in Fig.~\ref{fig:vmaxele}(C). The curve rises when the MG purchases electricity and falls when the MG sells electricity. From slot $150$ to $200$,  the operation cost drops significantly due to profits of selling excess electricity from the DEREs. The operation cost is $\$418.10$ by the end of the period, which means the net spending of the MG is $\$418.10$ on the utility market.   

We then examine the energy levels of the batteries in Fig.~\ref{fig:vmaxbat}. We only plot the levels of three batteries in the first $50$ time slots for clarity. The proposed control policy charges and discharges the batteries in the range of $0$ to $16$ kWh, which falls strictly within the battery capacity limit. It can be seen that the amount of energy for charging or discharging in one slot is limited by $2$ kWh in the figure, due to the short time slots comparing to the $2$-hour fully charge/discharge periods. For longer time slot durations and batteries with faster charge/discharge speeds, the variation of the energy level in Fig.~\ref{fig:vmaxbat} could be higher. However, 
Theorem~\ref{propo:battery} indicates that the feasibility of the battery management constraint is always ensured, if the control parameter $V$ satisfies $0 < V \leq V_{max}$.




We next evaluate the performance of the proposed adaptive control algorithm under different values of control parameter $V$. 
For different values $V =\{ V_{max}, V_{max}/2, V_{max}/4\}$, the QoSEs are stabilized at $0.081$, $0.061$, and $0.055$,
and the total operation cost are \$$418.10$, \$$625.69$, and \$$717.75$, respectively. 
We find the QoSE decreases from $0.081$ to $0.055$, while the total operation cost is increased from \$$418.10$ to \$$717.75$, as $V_{max}$ is decreased. 
This demonstrates the performance-congestion trade-off 
as in Theorem~\ref{propo:objbound}: a larger $V$ leads to a smaller objective value (i.e., the operating cost), but the system is also penalized by a larger virtual queue backlog, which corresponds to a higher QoSE. On the contrary, a smaller $V$ favors the resident quality usage, but increases the total operation cost. In practice, we can select a proper value for this parameter based on the MG design specifications.






It would be interesting to examine the case where the residents require different QoSEs. We assume $5$ residents with a service contract 
for lower QoSEs. We plot the average QoSEs of three residents with $V = V_{max}/2$ in Fig.~\ref{fig:hetroqou}. Resident $1$ prefers an outage probability $\delta_1=0.02$, while residents $2$ and $3$ require an outage probability $\delta_2=\delta_3=0.07$. It can be seen in Fig.~\ref{fig:hetroqou} that resident $1$'s QoSE converges to $0.015$, while the other two residents' QoSEs remains around $0.063$.


\subsection{Comparison with a Benchmark} 

We compare the performance of the proposed scheme with a heuristic MG electricity control policy (MECP), which serves as a benchmark. 
In MECP, the MGCC blocks quality usage requests simply by tossing a coin with the target probability. We use $\delta_n=0.03$ in the following simulations. If there is sufficient electricity from the DRERs, 
all the quality usage requests will be granted and the excess energy will be stored in the ESS's. If there is still any surplus energy, the MGCC will sell it to the macrogrid. If there is insufficient electricity from the DRERs, 
the ESS's will be discharged to serve the quality usage requests. The MGCC will purchase electricity from the macrogrid if even more electricity is required. 
Finally, with a predefined probability, e.g., $0.5$ in the following simulation, the MG purchases as much energy as possible to charge the ESS's.

We run $100$ simulations with different random seeds for a seven-day period. We assume in the first five days the resident behavior is the same as previous default settings. In the last two days, we assume the residents are apt to request more electricity (e.g., more activities in weekends) 
We assume in the last two days the resident pre-agreed basic usage power demand is uniformly distributed from $5$ kW to $35$ kW. The quality usage power is uniformly distributed from $0$ to $20$ kW. 

We find that the proposed algorithm earns 
$\$947.27$ from the utility market (with $95\%$ confidence interval $[950.65,943.89]$). The profit mainly comes from the abundant DRER generation in the last two days, as shown in Fig.~\ref{fig:costcisingle}. MECP only earns $\$379.74$ from the market (with $95\%$ confidence interval $[387.96, 371.52]$), which is $60\%$ lower than that of the proposed control policy. We also find that the QoSEs under the proposed control policy remains about $0.025$, which is lower than the criteria $\delta_n=0.03$. This is because there are a sudden price jump from $\$27$/MWh to $\$356$/MWh in the afternoon of the last day. This sharp increment increases $C_{max}$ eight times and decreases the value of $V_{max}$. Due to the performance-congestion trade-off, the QoSEs become smaller (lower than MECP's $0.03$ level). 



\begin{figure} [!t]
\centering
\includegraphics[width=3.4in, height=3.8in]{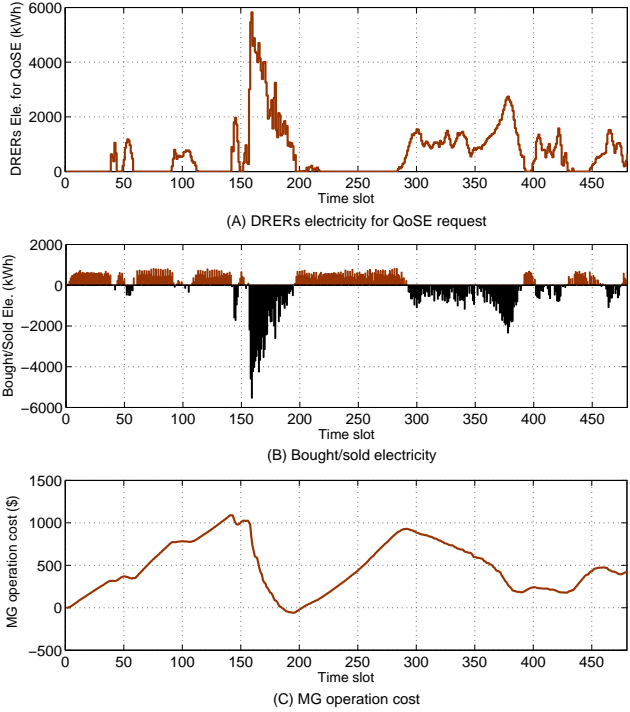}
\caption{MG operation traces of the proposed algorithm for the $5$-day period.} 
\label{fig:vmaxele}
\vspace{-0.15in}
\end{figure}

\begin{figure} [!t]
\centering
\includegraphics[width=3.4in, height=3.8in]{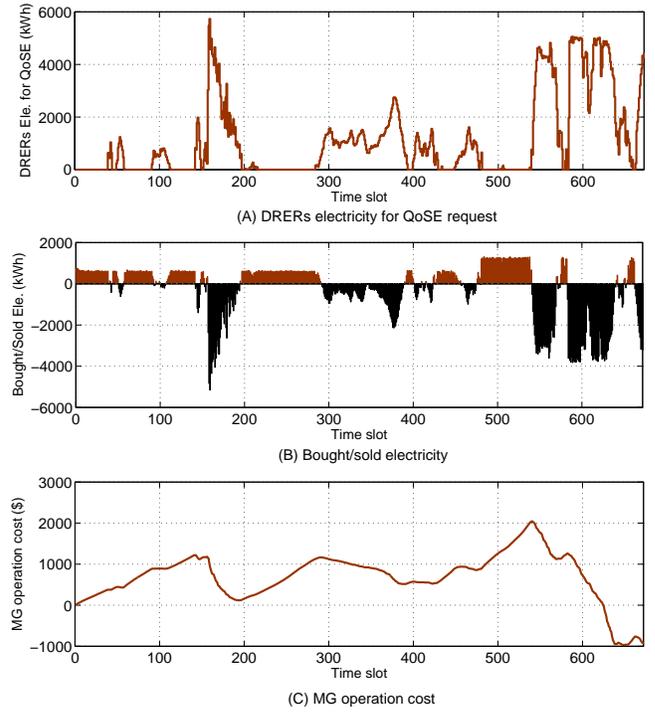}
\caption{MG operation traces of proposed algorithm for the $7$-day period.} 
\label{fig:costcisingle}
\vspace{-0.15in}
\end{figure}

\section{Related Work} \label{sec:related}

SG is regarded as the next generation power grid with two-way flows of electricity and information. 
Several comprehensive reviews of SG technologies can be found in~\cite{Farhangi10, Fang11}.
%
Recently, SG research is attracting considerable interest from the networking and communications communities~\cite{He11,Karimi12,Lu12,Rahman12,Ma12,Liang12}.
For example, the design of wireless communication systems in SG is studied in~\cite{Karimi12}. The authors of~\cite{Lu12,Rahman12} 
explore the important wireless communication security issues in smart grid. The energy management and power flow control in the grid is investigated in~\cite{He11} 
to reach system-wide reliability under uncertainties. 
The frequency oscillation in power networks is studied in~\cite{Ma12} by epidemic propagation and a social network based approach. The electric power management with PHEVs 
are examined in~\cite{Liang12}.

Microgrid is a new grid structure to group DRERs and local residents loads, which provides a promising way for the future SG. In~\cite{Huang08}, the authors review the MG structure with distributed energy resources. In~\cite{Liu10}, 
the integration of random wind power generation into grids for cost effective operation is investigated. In~\cite{Fang112}, the authors propose a useful online method to discover all available DRERs within the islanded mode mircogrid and compute a DRER access strategy. 
The problem of optimal residential demand management is studied in~\cite{Li11},
aiming to adapt to time-varying energy generation and prices, and maximize user benefit. 
In~\cite{Koutsopoulos11}, the authors investigate energy storage management with a dynamic programming approach. 
The size of the ESS's for MG energy storage is explored in~\cite{Chen12}.

Lyapunov optimization is a useful stochastic optimization method~\cite{Tassiulas92}. 
It integrates the Lyapunov stability concept of control theory with optimization and provides an efficient framework for solving schedule and control problems. 
It has been widely used and extended in the communications and networking areas~\cite{Tassiulas92, Neely05}. 
In two recent work~\cite{Neely11, Urgaonkar11}, the Lyapunov optimization method is applied to jointly optimize power procurement and dynamic pricing. In~\cite{Neely11}, the authors investigate the problem of profit maximization for delay tolerant consumers. In~\cite{Urgaonkar11}, the authors study electricity storage management for data centers, aiming to meet the workload requirement. Both of the work are designed based on a \textit{single} energy consumption entity model. 

\section{Conclusion} \label{sec:con}

In this paper, we developed an online adaptive electricity scheduling algorithm for smart energy management in MGs by jointly considering renewable energy penertration, ESS management, residential demand management, and utility market participation. We introduced a QoSE model by taking into account minimization of the MG operation cost, while maintaining the outage probabilities of resident quality usage. We transformed the QoSE control problem and ESS management problem into queue stability problems by introducing the QoSE virtual queues and battery virtual queues. 
The Lyapunov optimization method was applied to solve the problem with an efficient online electricity scheduling algorithm, which has deterministic performance bounds. Our simulation study validated the superior performance of the proposed approach. 

\section*{Acknowledgment}

This work is supported in part by the US National Science Foundation (NSF) under Grants CNS-0953513. 
Any opinions, findings, and conclusions or recommendations expressed in this material are those of the author(s) and do not necessarily reflect the views of the foundation.


\begin{thebibliography}{10}
\providecommand{\url}[1]{#1}
\csname url@samestyle\endcsname
\providecommand{\newblock}{\relax}
\providecommand{\bibinfo}[2]{#2}
\providecommand{\BIBentrySTDinterwordspacing}{\spaceskip=0pt\relax}
\providecommand{\BIBentryALTinterwordstretchfactor}{4}
\providecommand{\BIBentryALTinterwordspacing}{\spaceskip=\fontdimen2\font plus
\BIBentryALTinterwordstretchfactor\fontdimen3\font minus
  \fontdimen4\font\relax}
\providecommand{\BIBforeignlanguage}[2]{{%
\expandafter\ifx\csname l@#1\endcsname\relax
\typeout{** WARNING: IEEEtran.bst: No hyphenation pattern has been}%
\typeout{** loaded for the language `#1'. Using the pattern for}%
\typeout{** the default language instead.}%
\else
\language=\csname l@#1\endcsname
\fi
#2}}
\providecommand{\BIBdecl}{\relax}
\BIBdecl

\bibitem{Huang13}
Y.~Huang, S.~Mao, and R.~M. Nelms, ``Adaptive electricity scheduling in
  microgrids,'' in \emph{Proc. {IEEE} {INFOCOM}'13}, Turin, Italy, Apr. 2013,
  pp. 1--9.

\bibitem{RRA09}
Whitehouse.gov, ``Battery and electric vehicle report,'' Jul. 2010, [online]
  Available:
  \url{http://www.whitehouse.gov/files/documents/Battery-and-Electric-Vehicle-Report-FINAL.pdf}.

\bibitem{Fang11}
X.~Fang, S.~Misra, G.~Xue, and D.~Yang, ``Smart grid - the new and improved
  power grid: A survey,'' \emph{{IEEE} Commun. Surveys \& Tutorials}, vol.~PP,
  no.~99, pp. 1--37, Dec. 2011.

\bibitem{Huang08}
J.~Huang, C.~Jiang, and R.~Xu, ``A review on distributed energy resources and
  microgrid,'' \emph{{ELSEVIER} Renewable and Sustainable Energy Reviews},
  vol.~12, no.~9, pp. 2472--2483, Dec. 2008.

\bibitem{Farhangi10}
H.~Farhangi, ``The path of the smart grid,'' \emph{{IEEE}Power and Energy
  Magazine}, vol.~8, no.~1, pp. 18--28, Jan.-Feb. 2010.

\bibitem{Shao11}
S.~Shao, M.~Pipattanasomporn, and S.~Rahman, ``Demand response as a load
  shaping tool in an intelligent grid with electric vehicles,'' \emph{{IEEE}
  Trans. Smart Grid}, vol.~2, no.~4, pp. 624--631, Dec. 2011.

\bibitem{Fumagalli04}
E.~Fumagalli, J.~W. Black, I.~Vogelsang, and M.~Ilic, ``Quality of service
  provision in electric power distribution systems through reliability
  insurance,'' \emph{{IEEE} Trans. Power Systems}, vol.~19, no.~3, pp.
  1286--1293, Aug. 2004.

\bibitem{Tassiulas92}
L.~Tassiulas and A.~Ephremides, ``Stability properties of constrained queueing
  systems and scheduling policies for maximum throughput in multihop radio
  networks,'' \emph{{IEEE} Trans. Autom. Control}, vol.~37, no.~12, pp.
  1936--1948, Dec. 1992.

\bibitem{Slotine91}
J.~Slotine and W.~Li, \emph{{A}pplied {N}onlinear {C}ontrol}.\hskip 1em plus
  0.5em minus 0.4em\relax Prentice Hall, 1991.

\bibitem{Kim11}
T.~T. Kim and H.~Poor, ``Scheduling power consumption with price uncertainty,''
  \emph{{IEEE} Trans. Smart Grid}, vol.~2, no.~3, pp. 519--527, Sept. 2011.

\bibitem{Neely05}
M.~Neely, E.~Modiano, and C.~Rohrs, ``Dynamic power allocation and routing for
  time-varying wireless networks,'' \emph{{IEEE} J. Sel. Areas Commun.},
  vol.~23, no.~1, pp. 89--103, Jan. 2005.

\bibitem{Neely08}
M.~J. Neely and R.~Urgaonkar, ``Opportunism, backpressure, and stochastic
  optimization with the wireless broadcast advantage,'' in \emph{Asilomar
  Conference on Signals, Systems, and Computers'08}, Pacific Grove, CA, Oct.
  2008, pp. 1--7.

\bibitem{WWR}
The National Renewable Energy Laboratory, ``Western wind resources dataset,''
  [online] Available: \url{http://wind.nrel.gov/Web_nrel/}.

\bibitem{Peterson10}
S.~B. Peterson, J.~F. Whitacre, and J.~Apt, ``The economics of using plug-in
  hybrid electric vehicle battery packs for grid storage,'' \emph{J. Power
  Sources}, vol. 195, no.~8, pp. 2377--2384, 2010.

\bibitem{ERCOT}
``The Electric Reliability Council of Texas,'' [online] Available:
  \url{http://www.ercot.com/}.

\bibitem{He11}
M.~He, S.~Murugesan, and J.~Zhang, ``Multiple timescale dispatch and scheduling
  for stochastic reliability in smart grids with wind generation integration,''
  in \emph{Proc. {IEEE} {INFOCOM}'11}, Shanghai, China, Apr. 2011, pp.
  461--465.

\bibitem{Karimi12}
B.~Karimi and V.~Namboodiri, ``Capacity analysis of a wireless backhaul for
  metering in the smart grid,'' in \emph{Proc. {IEEE} {INFOCOM}'12}, Orlando,
  FL, Mar. 2012, pp. 61--66.

\bibitem{Lu12}
Z.~Lu, W.~Wang, and C.~Wang, ``Hiding traffic with camouflage: Minimizing
  message delay in the smart grid under jamming,'' in \emph{Proc. {IEEE}
  {INFOCOM}'12}, Orlando, FL, Mar. 2012, pp. 3066--3070.

\bibitem{Rahman12}
M.~A. Rahman, P.~Bera, and E.~Al-Shaer, ``Smartanalyzer: A noninvasive security
  threat analyzer for {AMI} smart grid,'' in \emph{Proc. {IEEE} {INFOCOM}'12},
  Orlando, FL, Mar. 2012, pp. 2255--2263.

\bibitem{Ma12}
H.~Ma, H.~Li, and Z.~Han, ``A framework of frequency oscillation in power grid:
  Epidemic propagation over social networks,'' in \emph{Proc. {IEEE}
  {INFOCOM}'12}, Orlando, FL, Mar. 2012, pp. 67--72.

\bibitem{Liang12}
H.~Liang, B.~J. Choi, W.~Zhuang, and X.~Shen, ``Towards optimal energy
  store-carry-and-deliver for {PHEV}s via {V2G} system,'' in \emph{Proc. {IEEE}
  {INFOCOM}'12}, Orlando, FL, Mar. 2012, pp. 1674--1682.

\bibitem{Liu10}
X.~Liu, ``Economic load dispatch constrained by wind power availability: A
  wait-and-see approach,'' \emph{{IEEE} Smart Grid}, vol.~1, no.~3, pp.
  347--355, Dec. 2010.

\bibitem{Fang112}
X.~Fang, D.~Yang, and G.~Xue, ``Online strategizing distributed renewable
  energy resource access in islanded microgrids,'' in \emph{{IEEE}
  {GLOBECOM}'11}, Huston, TX, Dec. 2011, pp. 1931--1937.

\bibitem{Li11}
N.~Li, L.~Chen, and S.~H. Low, ``Optimal demand response based on utility
  maximization in power networks,'' in \emph{2011 {IEEE} {PES General
  Meeting}}, Detroit, MI, Jul. 2011, pp. 1--8.

\bibitem{Koutsopoulos11}
I.~Koutsopoulos, V.~Hatzi, and L.~Tassiulas, ``Optimal energy storage control
  policies for the smart power grid,'' in \emph{{IEEE} {SmartGridComm}'11},
  Oct. 2011, pp. 475--480.

\bibitem{Chen12}
S.~X. Chen, H.~B. Gooi, and M.~Q. Wang, ``Sizing of energy storage for
  microgrids,'' \emph{{IEEE} Trans. Smart Grid}, vol.~3, no.~1, pp. 142--151,
  Mar. 2012.

\bibitem{Neely11}
M.~J. Neely, A.~S. Tehrani, and A.~G. Dimakis, ``Efficient algorithms for
  renewable energy allocation to delay tolerant consumers,'' in \emph{{IEEE}
  {SmartGridComm}'10}, Oct. 2010, pp. 549--554.

\bibitem{Urgaonkar11}
R.~Urgaonkar, B.~Urgaonkar, M.~J. Neely, and A.~Sivasubramaniam, ``Optimal
  power cost management using stroed energy in data centers,'' in \emph{Proc.
  {ACM} {SIGMETRICS}'11}, San Jose, CA, Jun. 2011, pp. 221--232.

\end{thebibliography}



\appendices

\section{Derivation of Equation~(\ref{eq:drift})} \label{app:drift}

With the drift defined as in (\ref{eq:driftdef}), we have
\begin{eqnarray}
\Delta(\vec{\Theta}(t)) &=& \frac{1}{2} \mathbb{E} \left\{ \sum_{k=1}^K [(X_k(t+1))^2 - (X_k(t))^2 | X_k(t)]+ \right. \nonumber\\
             & & \hspace*{-0.6in} \left. \sum_{n=1}^N [(Z_n(t+1))^2 - (Z_n(t))^2 | Z_n(t)] \right\}\nonumber
\end{eqnarray}
\begin{eqnarray}
            &\hspace*{-1.0in} \leq& \hspace*{-0.6in} \frac{1}{2}\mathbb{E} \left\{ \sum_{k=1}^K \left[ (D_k(t))^2 + (R_k(t))^2 + 2X_k(t)(R_k(t) - \right.\right. \nonumber\\
             & & \hspace*{-0.6in} \left. \left. D_k(t))|X_k(t) \right] \right\} + \frac{1}{2} \mathbb{E} \left\{\sum_{n=1}^N \left[ I_n(t)^2+ \right. \right. \nonumber \\
             & & \hspace*{-0.6in} (\delta_n\alpha_n(t))^2 + 2Z_n(t)(I_n(t) - \delta_n\alpha_n(t))|Z_n(t)\}\nonumber
\end{eqnarray}
\begin{eqnarray}
             &\hspace*{-1.0in}=& 
             \hspace*{-0.6in} \frac{1}{2} \sum_{k=1}^K \mathbb{E}\{[(D_k(t))^2+(R_k(t))^2]\} + \nonumber \\
             & & \hspace*{-0.6in} \sum_{k=1}^K \mathbb{E}\{X_k(t)(R_k(t) - D_k(t))|X_k(t)\} + \nonumber 
\end{eqnarray}
\begin{eqnarray}
             & & \hspace*{-0.6in} 
             \frac{1}{2} \sum_{n=1}^N \mathbb{E}\{[(1+(\sigma_n)^2)(\alpha_n(t))^2 + (p_n(t))^2]\} +  \nonumber\\
             & & \hspace*{-0.6in} \frac{1}{2} \sum_{n=1}^N\mathbb{E}\{2Z_n(t)(1-\delta_n(t))\alpha_n(t) - \nonumber\\
             & & \hspace*{-0.6in} (Z_n(t) + \alpha_n(t))p_n(t)|Z_n(t)\}\nonumber\\
             &\hspace*{-1.0in}\leq& \hspace*{-0.6in} B+\sum_{n=1}^N\mathbb{E}\{Z_n(t)(1-\delta_n)\alpha_n(t)|Z_n(t)\} + \nonumber\\
             & & \hspace*{-0.6in} \sum_{k=1}^K\mathbb{E}\{X_k(t)(R_k(t) - D_k(t))|X_k(t)\} - \nonumber \\
             & & \hspace*{-0.6in} \sum_{n=1}^N\mathbb{E}\{(Z_n(t) + \alpha_n(t))p_n(t)|Z_n(t)\}. \nonumber
\end{eqnarray}
where $B = \frac{1}{2} \sum_{k=1}^K (\max\{D_k^{max}, R_k^{max}\})^2 + \frac{1}{2} \sum_{n=1}^N (2+\delta_n^2)(\alpha_n^{max})^2$ is a constant.

\section{Proof of Lemma~\ref{lm:lm1}} \label{app:lm1}

\begin{proof}
In part 1) of Lemma~\ref{lm:lm1}, if $Q(t) > 0$, we have $S(t) = 0$ according to~(\ref{eq:sellconst}). 
The the objective function of problem~(\ref{eq:objdrift}) becomes
\begin{eqnarray} \label{eq:lmqt}
&& \hspace{-0.7in}  VQ(t)C(t) + \sum_{k=1}^K{X_k(t)(R_k(t) - D_k(t))} -  \nonumber\\ 
&& \hspace{0.7in} \sum_{n=1}^N{(Z_n(t) + \alpha_n(t))p_n(t)}.
\end{eqnarray}

We first prove Lemma~\ref{lm:lm1}-\ref{item:item1a}). If $X_k(t) > -VC(t)$, we assume $R_k(t) > 0$. Then we have $D_k(t) = 0$ according to~(\ref{eq:batredecontr}). Accordingly, the object function~(\ref{eq:lmqt}) is transformed to
\begin{eqnarray}
&&VQ(t)C(t) + \sum_{i \neq k}{X_i(t)(R_i(t) - D_i(t))} - \nonumber\\ 
&&\sum_{n=1}^N{(Z_n(t) + \alpha_n(t))p_n(t)} + X_k(t)R_k(t) \nonumber
\end{eqnarray}
\begin{eqnarray}
&>&VQ(t)C(t) + \sum_{i \neq k}{X_i(t)(R_i(t) - D_i(t))} - \nonumber\\ 
&&\sum_{n=1}^N{(Z_n(t) + \alpha_n(t))p_n(t)} - VC(t)(P(t)+Q(t)- \nonumber\\
&&\sum_{i \neq k}(R_i(t) - D_i(t)) - \sum_{n=1}^Np_n(t) \nonumber\\
&=&V \left[ \sum_{i \neq k}(R_i(t) - D_i(t)) + \sum_{n=1}^Np_n(t) -P(t) \right] C(t)+\nonumber\\
&& \sum_{i \neq k}{X_i(t)(R_i(t) - D_i(t))} - \sum_{n=1}^N{(Z_n(t) + \alpha_n(t))p_n(t)}. \nonumber
\end{eqnarray}
The above inequality is due to $X_k(t) > -VC(t)$ and $R_k(t) = P(t) + Q(t) - \sum_{i \neq k}(R_i(t) - D_i(t)) - \sum_{n=1}^Np_n(t) \geq 0$. The last expression shows, given the assumption $R_k(t) > 0$, we may find another feasible electricity allocation scheme $\tilde{Q}(t) = \sum_{i \neq k}(R_i(t) - D_i(t)) + \sum_{n=1}^Np_n(t)-P(t)$, which can achieve a smaller objective value 
by choosing $R_k(t) = 0$ and $D_k(t) = 0$. 
This contradicts with the assumption $R_k(t) > 0$. Thus, we prove that $R_k(t) = 0$ when $X_k(t) > -VC(t)$, under the situation $Q(t) > 0, S(t) = 0$.

We then prove the second part of Lemma~\ref{lm:lm1}-\ref{item:item1a}). It follows~(\ref{eq:batredecontr}) that $R_k(t) = 0$ if $D_k(t) > 0$. 
Then~(\ref{eq:lmqt}) becomes
\begin{eqnarray}
&&\hspace{-0.2in} VQ(t)C(t) + \sum_{i \neq k}{X_i(t)(R_i(t) - D_i(t))} - \nonumber\\ 
&&\hspace{-0.2in} \sum_{n=1}^N{(Z_n(t) + \alpha_n(t))p_n(t)} - X_k(t)D_k(t) \nonumber\\
&\hspace{-0.4in} >&\hspace{-0.2in} VQ(t)C(t) + \sum_{i \neq k}{X_i(t)(R_i(t) - D_i(t))} - \nonumber\\ 
&&\hspace{-0.2in} \sum_{n=1}^N{(Z_n(t) + \alpha_n(t))p_n(t)} + VC(t)(\sum_{n=1}^Np_n(t) - P(t)- \nonumber\\
&&\hspace{-0.2in} Q(t)+\sum_{i \neq k}(R_i(t) - D_i(t)) ) \nonumber\\
&\hspace{-0.4in} =&\hspace{-0.2in} V \left[ \sum_{i \neq k}(R_i(t) - D_i(t)) + \sum_{n=1}^Np_n(t)-P(t) \right] C(t)+\nonumber\\
&&\hspace{-0.2in} \sum_{i \neq k}{X_i(t)(R_i(t) - D_i(t))} - \sum_{n=1}^N{(Z_n(t) + \alpha_n(t))p_n(t)}.\nonumber
\end{eqnarray}
The above inequality is due to $X_k(t) < -VC(t) < 0$ and $D_i(t) = -P(t) - Q(t) + \sum_{i \neq k}(R_i(t) + D_k(t)) + \sum_{n=1}^Np_n(t) > 0$. The last expression shows, given the assumption $D_k(t) > 0$, we may find another electricity allocation scheme with $\tilde{Q}(t) = \sum_{i \neq k}(R_i(t) - D_i(t)) + \sum_{n=1}^Np_n(t) - P(t)$, which can achieve a smaller objective value by choosing $R_k(t) = 0$ and $D_k(t) = 0$. 
This contradicts with the assumption $D_k(t)>0$. We thus prove that $D_k(t) = 0$ when $X_k(t) < -VC(t)$, under the situation $Q(t) > 0, S(t) = 0$, which completes the proof of Lemma~\ref{lm:lm1}-\ref{item:item1a}).

We next prove Lemma~\ref{lm:lm1}-\ref{item:item1b}). For the first part, if $Z_n(t) > VC(t) - \alpha_n(t)$, we assume $0 \leq p_n(t) < (1-\delta_n)\alpha_n(t)$. Following~(\ref{eq:objdriftfull}) and $S(t) = 0$, we have
\begin{eqnarray}
&&\hspace{-0.15in} B+VQ(t)C(t) + \sum_{k=1}^KX_k(t)(R_k(t) - D_k(t)) + \nonumber\\
&&\hspace{-0.15in} \sum_{j\neq n}(Z_j(t)(1-\delta_j)\alpha_j(t) - (Z_j(t) + \alpha_j(t))p_j(t)) + \nonumber\\
&&\hspace{-0.15in} Z_n(t)(1-\delta_n)\alpha_n(t) - (Z_n(t)+\alpha_n(t))p_n(t) \nonumber
\end{eqnarray}
\begin{eqnarray}
&\hspace{-0.2in} =&\hspace{-0.15in} B+VQ(t)C(t) + \sum_{k=1}^KX_k(t)(R_k(t) - D_k(t)) + \nonumber\\
&&\hspace{-0.15in} \sum_{j\neq n}(Z_j(t)(1-\delta_j)\alpha_j(t) - (Z_j(t) + \alpha_j(t))p_j(t)) + \nonumber\\
&&\hspace{-0.15in} Z_n(t)[(1-\delta_n)\alpha_n(t) - p_n(t)] - \alpha_n(t)p_n(t) \nonumber
\end{eqnarray}
\begin{eqnarray}
&\hspace{-0.2in} >&\hspace{-0.15in}  B+VQ(t)C(t) + \sum_{k=1}^KX_k(t)(R_k(t) - D_k(t)) + \nonumber
\end{eqnarray}
\begin{eqnarray}
&&\hspace{-0.15in} \sum_{j\neq n}(Z_j(t)(1-\delta_j)\alpha_j(t) - (Z_j(t) + \alpha_j(t))p_j(t)) + \nonumber\\
&&\hspace{-0.15in} (VC(t) - \alpha_n(t))[(1-\delta_n)\alpha_n(t) - p_n(t)] - \alpha_n(t)p_n(t) \nonumber
\end{eqnarray}
\begin{eqnarray}
&\hspace{-0.2in} =&\hspace{-0.15in}  B+V[\sum_{k=1}^K(R_k(t) - D_k(t)) + \sum_{j\neq n}p_j(t) -P(t) + \nonumber\\
&&\hspace{-0.15in} (1-\delta_n)\alpha_n(t)]C(t) + \sum_{k=1}^KX_k(t)(R_k(t) - D_k(t)) + \nonumber\\
&&\hspace{-0.15in} \sum_{j\neq n}(Z_j(t)(1-\delta_j)\alpha_j(t) - (Z_j(t) + \alpha_j(t))p_j(t)) + \nonumber\\
&&\hspace{-0.15in}  Z_n(t)(1-\delta_n)\alpha_n(t) - (Z_n(t) + \alpha_n(t))(1-\delta_n)\alpha_n(t).\nonumber
\end{eqnarray}
The above inequality is due to $Z_n(t) > VC(t) - \alpha_n(t)$ and the assumption $p_n(t) < (1-\delta_n)\alpha_n(t)$. The last equality shows, given the assumption $p_n(t) < (1-\delta_n)\alpha_n(t)$, we may find another electricity allocation scheme with $p_n(t) = (1-\delta_n)\alpha_n(t)$ and $\tilde{Q}(t) = \sum_{k=1}^K(R_k(t) - D_k(t)) + \sum_{j\neq n}p_j(t) -P(t) + (1-\delta_n)\alpha_n(t)$, which can achieve a smaller objective value. This contradicts with the previous assumption. Thus, we have $p_n(t) \geq (1-\delta_n)\alpha_n(t)$.

For the second part of Lemma~\ref{lm:lm1}-\ref{item:item1b}), assume $p_n(t) > 0$ for $0 \leq Z_n(t) < VC(t) - \alpha_n(t)$.It follows~(\ref{eq:sellconst}) that $S(t) = 0$. The objective function~(\ref{eq:objdriftfull}) can be written as
\begin{eqnarray}
&&B+VQ(t)C(t) + \sum_{k=1}^KX_k(t)(R_k(t) - D_k(t)) + \nonumber\\
&&\sum_{j\neq n}(Z_j(t)(1-\delta_j)\alpha_j(t) - (Z_j(t) + \alpha_j(t))p_j(t)) + \nonumber\\
&&Z_n(t)(1-\delta_n)\alpha_n(t) - (Z_n(t)+\alpha_n(t))p_n(t) \nonumber 
\end{eqnarray}
\begin{eqnarray}
&>& B+VQ(t)C(t) + \sum_{k=1}^KX_k(t)(R_k(t) - D_k(t)) + \nonumber \\
&&\sum_{j\neq n}(Z_j(t)(1-\delta_j)\alpha_j(t) - (Z_j(t) + \alpha_j(t))p_j(t)) + \nonumber\\
&&Z_n(t)(1-\delta_n)\alpha_n(t) - VC(t)p_n(t) \nonumber
\end{eqnarray}
\begin{eqnarray}
&\geq& B+V[-P(t)+\sum_{k=1}^K(R_k(t) - D_k(t)) + \nonumber\\
&&\sum_{j\neq n}p_j(t)]C(t) + \sum_{k=1}^KX_k(t)(R_k(t) - D_k(t)) + \nonumber\\
&&\sum_{j\neq n}(Z_j(t)(1-\delta_j)\alpha_j(t)) - \sum_{j\neq n}((Z_j(t) + \alpha_j(t))p_j(t)). \nonumber
\end{eqnarray}
The first inequality is due to $0 \leq Z_n(t) < VC(t) - \alpha_n(t)$ and the assumption $p_n(t) > 0$. The second inequality 
is due to the non-negativity of $Z_n(t)$ and $\alpha_n(t)$. The last equation shows, given the assumption $p_n(t) > 0$, we may find another electricity allocation scheme with $p_n(t) = 0$ and $\tilde{Q}(t) = -P(t)+\sum_{k=1}^K(R_k(t) - D_k(t)) + \sum_{j\neq n}p_j(t)$, which can achieve a smaller objective value. This contradicts with the previous assumption. Thus, we have $p_n(t) =0$, which completes the proof of Lemma~\ref{lm:lm1}-\ref{item:item1b}).

In part~\ref{enu:enu2}) of Lemma~\ref{lm:lm1}, if $S(t) > 0$, we have $Q(t) = 0$ according to~(\ref{eq:sellconst}). The objective function~(\ref{eq:objdrift}) becomes 
\begin{eqnarray} \label{eq:lmwt}
&& \hspace{-0.7in} -VS(t)W(t) + \sum_{k=1}^K{X_k(t)(R_k(t) - D_k(t))} - \nonumber\\ 
&& \hspace{0.7in} \sum_{n=1}^N{(Z_n(t) + \alpha_n(t))p_n(t)}.
\end{eqnarray}
We can prove part~\ref{enu:enu2}) with a similar approach as in the case of part~\ref{enu:enu1}. The detailed proof is omitted for brevity.
\end{proof}

\section{Proof of Lemma~\ref{lm:lm2}} \label{app:lm2}

\begin{proof}
Since $0 \leq C_{min} \leq C(t) \leq C_{max}$ and $V > 0$, we have $R_k(t) = 0$ when $X_k(t) < -VC_{max}$, and $D_k(t) = 0$ when $X_k(t) > -VC_{min}$ according to Lemma~\ref{lm:lm1}-\ref{enu:enu1}). Similarly, since $0 \leq W_{min} \leq W(t) \leq W_{max}$ and $V > 0$, we obtain $R_k(t) = 0$ when $X_k(t) < -VW_{max}$, and $D_k(t) = 0$ when$X_k(t) > -VW_{min}$ 
according to Lemma~\ref{lm:lm1}-\ref{enu:enu2})

Since $C_{max} > W_{max}$ and $C_{min} > W_{min}$,  we conclude that if $X_k(t) > -VW_{min}$, the optimal solution always select $R_k(t) = 0$. If $X_k(t) < -VC_{max}$, the optimal solution always select $D_k(t) = 0$. The proof is completed. 
\end{proof}

\end{document}